\documentclass
[reqno]{amsart}
\usepackage[utf8]{inputenc}
\usepackage{amssymb,amscd}
\usepackage{thmtools}

\usepackage[textsize=tiny]{todonotes}
\usepackage{hyperref}
\usepackage[shortlabels]{enumitem}
\usepackage{float} 
\usepackage{cleveref} 
\usepackage{graphicx}
\usepackage{float}

\usepackage{mathtools} 
\usepackage{url}
\usepackage{cases} 
\usepackage{mathrsfs}

\newtheorem{definition}{Definition}



\setcounter{secnumdepth}{1} 


\newtheoremstyle{alstandard}{7pt}{3pt}{\rm}{}{\scshape}{:}{0.5em}{}
\theoremstyle{alstandard}
\swapnumbers
\declaretheorem[name=Theorem]{theorem}
\numberwithin{theorem}{section}
\declaretheorem[sibling=theorem, name=Lemma]{lemma}

\declaretheorem[sibling=theorem, name=Definition]{defi}








\usepackage{pifont}
%
%


\begin{document}
	\title[]{Parasitic actuation lag limits the minimum employable time headway in connected and autonomous vehicles} 
	
	\author[]{Guoqi Ma, Prabhakar R. Pagilla, and Swaroop Darbha}
	\address{Department of Mechanical Engineering, Texas A\emph{\&}M University, College Station, TX 77843, USA}
	\email{gqma@tamu.edu, ppagilla@tamu.edu, dswaroop@tamu.edu}

	\begin{abstract} 
Adaptive and cooperative adaptive cruise control (ACC and CACC) and next generation CACC (CACC+) systems usually employ a constant time headway policy (CTHP) for platooning of connected and autonomous vehicles (CAVs). In ACC, the ego vehicle uses onboard sensors to measure the position and velocity of the predecessor vehicle to maintain a desired spacing. The CACC and CACC+ systems use additional information, such as acceleration(s) communicated through vehicle-to-vehicle (V2V) communication of the predecessor vehicle(s); these systems have been shown to result in improved spacing performance, throughput, and safety over ACC. Parasitic dynamics are generally difficult to model and the parasitic parameters (delay, lag, etc.) are difficult to obtain. Parasitic actuation delays can have deleterious effects and impose limits on the mobility and safety of CAVs. It is reasonable to assume that the bounds on parasitic actuation delays are known a priori. For CAVs, we need to address both internal stability and string stability in the presence of parasitic actuation delays. This requires robustness of internal and string stabilities for all values of parasitic actuation delays that are within the specified upper bound. In this paper, we show the following: given an upper bound on the parasitic actuation delays as $\tau_0$, the minimum employable time headway for ACC is $2\tau_0$, for CACC it is $2\tau_0/(1+k_a)$ where $k_a \in [0,1)$ is the control gain associated with the communicated acceleration of the predecessor vehicle, and $4 \tau_0/((1 + r) (1 + r k_a))$ for CACC+ (`$r$' predecessors look-ahead) where $k_a \in [0, 1/r)$ is the control gain associated with the communicated accelerations of the $r$ predecessor vehicles. The inclusion of the internal stability in the string stability condition is analyzed based on Pontryagin's interlacing theorem for time delay systems. We provide comparative numerical results to corroborate the achieved theoretical findings.
	\end{abstract}

	\maketitle
	\raggedbottom
	
\section{Introduction}\label{sec:intro}
It is well known that connectivity and coordination through onboard sensing and/or communication of autonomous vehicles improve road safety and traffic throughput~\cite{SHETTY2021103133}. In earlier designs and algorithms in Connected and Autonomous Vehicles (CAVs), the control objective was mainly focused on maintaining a desired following distance between vehicles to improve safety and capacity; this was achieved through Conventional Cruise Control (CCC) and Adaptive Cruise Control (ACC) by employing vehicle onboard sensing to measure the position and velocity of the predecessor vehicle. In recent years, with substantial advances in wireless communication technologies, such as Dedicated Short-Range Communications (DSRC)~\cite{5888501}, Long Term Evolution (LTE)~\cite{8108463}, Vehicle-to-Vehicle (V2V) communication~\cite{vinel2015vehicle}, the 5th generation communication system (5G)~\cite{9345798}, it has become easier to communicate and exchange information (e.g. acceleration/deceleration) between vehicles; this capability enables advanced vehicle following systems employing such as Cooperative Adaptive Cruise Control (CACC)~\cite{8569947} and next generation CACC (CACC+)~\cite{darbha2018benefits} and realizes the potential to further improve safety and increase throughput.

In ACC, CACC and CACC+ systems, the inter-vehicular spacing policy specifies the desired following distance between vehicles, which, in general, consists of the Constant Spacing Policy (CSP) and the Variable Spacing Policy (VSP)~\cite{iet2020platooning}. In CSP, the desired following distance between vehicles is a constant, while in VSP the desired following distance is varying to overcome some drawbacks with CSP and render enough safety threshold by taking into account the current motion status of the platoon. A commonly adopted VSP is the Constant Time Headway Policy (CTHP)~\cite{swaroop2001review}, where the desired following distance between vehicles is a linear function of the velocity of the host vehicle and the proportional coefficient is the time headway which can influence the spacing between vehicles and the tightness of the platoon significantly. A basic design requirement on CAVs is string stability under a certain inter-vehicular spacing policy, that is, under the chosen policy, the spacing errors are not amplified when propagating along the platoon string. Existing work on CAVs focuses on a wide range of issues including experimental validation~\cite{GE2018445,5571043,qin2019experimental,di2019design}, communication mechanisms~\cite{bazzi2017performance,10.1115/1.4036565,vegamoor2021string,7055887,fiengo2019distributed}, mixed human and autonomous vehicle platoons~\cite{9246221,jiang2021robust,garg2023can,MAHBUB2023111115}, intersection coordination~\cite{ILGINGULER2014121,XU2018322,8082807}, merging or cut-in analysis~\cite{SCHOLTE2022103511,li2024disturbances}, motion planning~\cite{muller2022motion,liu2022markov}, communication resource allocation~\cite{9885733}, safety-critical control~\cite{zhao2023safety}, traffic signal control~\cite{li2024cooperative} and so on. A more comprehensive review on the development of CAVs is provided by~\cite{Vegamoor_review_paper,FENG201981}.

In recent years, the time headway minimization problem in CTHP has received increasing attention with the goal of seeking the smallest achievable lower bound of the time headway such that the vehicle platoon can achieve the prescribed string stability with the tightest possible spacing. In~\cite{darbha2018benefits,BIAN201987,10038652}, the problem of the lower bound of the time headway for string stable CAVs has been investigated by assuming a first-order parasitic actuation lag. In~\cite{bekiaris2023robust}, the predictor-based control method was adopted to compensate the delay effect for the vehicle dynamics model with general actuation delay by adding integral terms in the control input; however, the points of departure from our work are that the time delay has to be known in this paper for the predictor to work, although robustness to small changes in time delay was considered, there is no explicit lower bound on the time headway as a function of time delay. It should be noted that in~\cite{klinge2009time,ploeg2014graceful}, the influence of time headway on string stability was investigated; however, a clear lower bound of the implementable time headway was not provided from a theoretical point of view. Although a comparison between different vehicle dynamics models for a certain brand of vehicles has been made through experimental analysis~\cite{MA2022103927}, there is no known theoretical analysis on obtaining the time headway lower bound for the general parasitic lag to achieve string stability. 

The contribution of this paper is the following. Given a maximum possible parasitic actuation delays of $\tau_0$, we show that the lower bounds of the minimum employable time headway are, respectively: (1) $2 \tau_0$ for ACC, (2) $\frac{2 \tau_0}{1 + k_a}$ for CACC, where $k_a \in[0, 1)$ is the control gain associated with the acceleration of the predecessor vehicle, and (3) $\frac{4 \tau_0}{( 1 + r ) ( 1 + r k_a )}$ for CACC+, where $r$ is the number of predecessors whose motion information is utilized in the ego vehicle and $k_a \in [0,\frac{1}{r})$ is the control gain associated with the accelerations of the predecessor vehicles. Our approach for internal stabilization relies on the Pontryagin's interlacing theorem for time delay systems \cite{bhattacharyya2018linear}.

The remainder of the paper is organized as follows. Section~\ref{section:problem-formulation-and-preliminaries} contains preliminaries that include vehicle dynamics and relevant definitions. The main theoretical results are developed in Section~\ref{section:main-results}. An illustrative numerical example and simulation results are provided in Section~\ref{section:numerical-simulations} to corroborate the achieved main theoretical results. Finally, some concluding remarks are given in Section~\ref{section:conclusion}.
	\section{Preliminaries}\label{section:problem-formulation-and-preliminaries}
\subsection{Vehicle string model}

Consider a string of autonomous vehicles equipped with V2V communication as illustrated in Figure~\ref{fig_illustration_of_vehicle_platoon}, in which $V_0$ denotes the lead vehicle, $V_1, V_2, \cdots, V_{N - 1}$, and $V_{N}$ denote the following vehicles. 
\begin{figure}[!htb]
\centering{\includegraphics[scale=0.55]{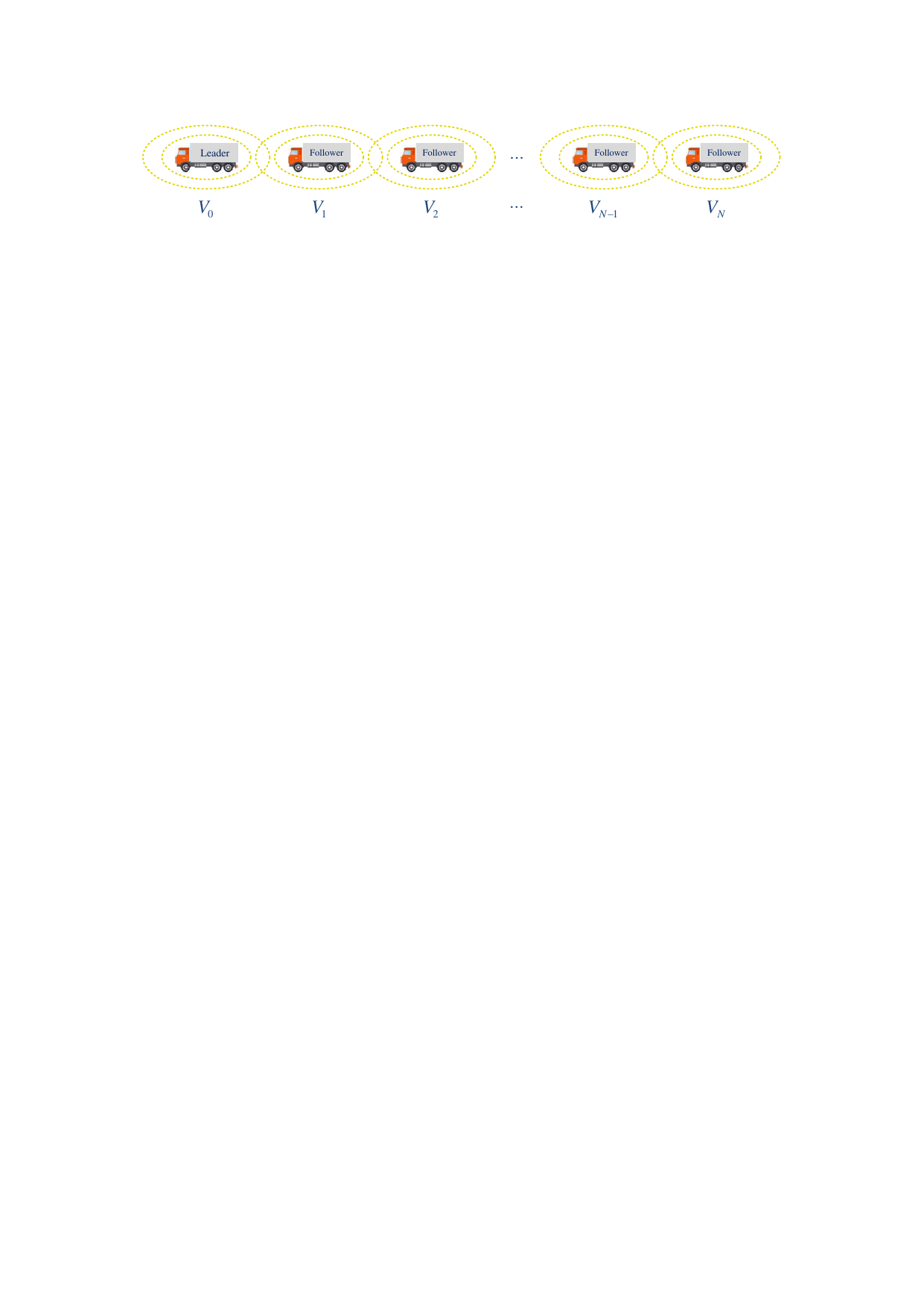}}
\caption{An illustration of a connected and autonomous vehicle platoon with V2V communication.
\label{fig_illustration_of_vehicle_platoon}}
\end{figure} 
A commonly used longitudinal double integrator dynamics model of the $i$-th following vehicle in the string (assuming feedback linearization of the nonlinear vehicle dynamics \cite{Vegamoor_review_paper}) is given by
\begin{align} \label{eq:vehicle-dynamics}
\left\{\begin{array}{*{20}{l}}
\dot{x}_{i}(t) = v_{i}(t), \\
\dot{v}_i(t) = a_i(t), \\
\tau \dot{a}_i(t) + a_{i}(t) = u_{i}(t),
\end{array}
\right.
\end{align} 
where $x_{i}(t)$, $v_i(t)$, $a_{i}(t)$, $u_{i}(t)$ represent the position, velocity, acceleration, and synthetic control input of the $i$-th following vehicle at time instant $t$, respectively, $\tau$ denotes the parasitic actuation lag, and the index $i \in \mathscr{N} = \{ 1, 2, \cdots, N \}$, where $N$ is the total number of the following vehicles in the platoon. 

The first-order parasitic actuation lag in \eqref{eq:vehicle-dynamics} models the physical limitation of the actuator of not providing instantaneous actuation. The first-order model is an approximation of the general form of actuator delay given by 
\begin{align} \label{eq:actual-acceleration-model}
 a_i(t) = u_i(t - \tau). 
\end{align} 
As a result, the vehicle dynamics model that incorporates the general actuation delay is given by
\begin{align} \label{eq:actual-vehicle-dynamics}
 \begin{cases}
 \dot{x}_i(t) = v_i(t), \\
 \dot{v}_i(t) = a_i(t), \\
 a_i(t) = u_i(t - \tau).
 \end{cases}
\end{align}
It is assumed that $\tau$ is {\it uncertain} with $\tau \in \left( 0, \tau_{0} \right]$, where $\tau_{0}$ is a positive constant.

\subsection{Definitions and background results}

\begin{definition}
 Let $e_i(t) = x_{i}(t) - x_{i - 1}(t) + d$ be the spacing error with CSP for the $i$-th following vehicle, where $d$ is the minimum or standstill spacing between adjacent vehicles, the generalized or velocity-dependent inter-vehicular spacing error with CTHP for the $i$-th following vehicle is defined as follows: 
\begin{align} \label{eq:delta-i-definition}
 \delta_i(t) = e_i(t) + h_w v_i(t),   
\end{align}
where $h_w$ is the time headway. 

\end{definition}

\begin{defi}[\cite{konduri2017robust}]
 The connected and autonomous vehicle platoon with vehicle dynamics model~\eqref{eq:actual-vehicle-dynamics} is said to be robustly string stable if $\forall \tau \in (0, \tau_0]$ it holds that $\Vert H(s;\tau) \Vert_{\infty} \le 1$, where $H(s;\tau)$ is the inter-vehicular spacing error propagation transfer function that satisfies $\hat{\delta}_i(s) = H(s;\tau) \hat{\delta}_{i - 1}(s)$, in which $\hat{\delta}_i(s)$ denotes the Laplace transform of $\delta_i(t)$.    
\end{defi}

We will employ the following two lemmas to show internal stability.

\begin{lemma}[\cite{bhattacharyya2018linear}]  (cf. theorem 3.3 on page 72 therein) \label{lemma:internal-stability} Consider the characteristic polynomial 
 \begin{align}
 D(s) = n_0(s) + \sum\limits_{i = 1}^m e^{-s \ell_i} n_i(s), 
 \end{align}
 that satisfies:
 \begin{enumerate}
     \item ${\rm deg}[n_0(s)] = q$ and ${\rm deg}[n_i(s)] \le q$, for $i = 1, 2, \cdots, m$;
      \item $0 < \ell_1 < \ell_2 < \cdots < \ell_m$; 
    \item $\ell_i = \alpha_i \ell_1$, $i = 2, \cdots, m$, and $\alpha_i$ are non-negative integers (commensurate delays).
 \end{enumerate}
Also, consider the following quasi-polynomial: 
 \begin{align}
  D^{\ast}(s) &= e^{ s \ell_m } D(s) \nonumber \\
  &= e^{s \ell_m } n_0(s) + \sum\limits_{i = 1}^m e^{s (\ell_m - \ell_i )} n_i(s).
 \end{align} 
 Substituting $s = j \omega$, $D^{\ast}(j \omega)$ can be written in the following form:
 \begin{align}
  D^{\ast}(j \omega) = D_r(\omega) + j D_i(\omega).   
 \end{align}
 Then, $D(s)$ or $D^{\ast}(s)$ is stable if and only if 
 \begin{enumerate}
     \item $D_r(\omega)$ and $D_i(\omega)$ have only simple, real roots and these interlace; 
    \item $D_i^{\prime}(\omega_0) D_r(\omega_0) - D_i(\omega_0) D_r^{\prime}(\omega_0) > 0$ for some $\omega_0$ in $(-\infty, \infty)$, where $D_i^{\prime}(\omega)$ and $D_r^{\prime}(\omega)$ denote the first derivative with respect to $\omega$ of $D_i(\omega)$ and $D_r(\omega)$, respectively.
 \end{enumerate}

\end{lemma}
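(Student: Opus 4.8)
The plan is to treat this as Pontryagin's quasi-polynomial generalization of the Hermite--Biehler theorem and to prove it through the argument principle (the Mikhailov criterion). Since $e^{s\ell_m}$ has no finite zeros, $D(s)$ and $D^*(s)$ have exactly the same zero set, so I would work with $D^*$ throughout and declare it stable precisely when it has no zeros in the closed right half-plane. By the argument principle this is equivalent to the net change in $\arg D^*(s)$, as $s$ runs over the boundary of a large right-half-disk, being zero; on the imaginary axis $s = j\omega$ that argument is the phase $\phi(\omega) = \arg\!\bigl(D_r(\omega) + j D_i(\omega)\bigr)$, so the entire statement reduces to controlling the behaviour of $\phi$.

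First I would pin down the asymptotics using the \emph{principal term}. Because $\deg n_0 = q \ge \deg n_i$ and the delays are commensurate (condition~(3)), the summand $e^{s\ell_m} n_0(s)$ strictly dominates every other term of $D^*(s)$ as $\Re(s) \to +\infty$; this is exactly the principal-term hypothesis under which Pontryagin's framework applies. A convenient way to make this rigorous is to set $z = e^{-s\ell_1}$ and write $D^*(s) = h(s,z)$ for an honest two-variable polynomial $h$, reducing everything to the distribution of zeros of $h\!\bigl(s, e^{-s\ell_1}\bigr)$. Dominance of the principal term is what controls the contribution of the large semicircle and fixes the total phase increment that $\phi$ must realise for stability.

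Next I would identify ``simple interlacing real roots plus the sign condition'' with ``strictly monotone phase.'' Differentiating the phase gives
\begin{align}
 \phi'(\omega) = \frac{D_i'(\omega) D_r(\omega) - D_i(\omega) D_r'(\omega)}{D_r(\omega)^2 + D_i(\omega)^2},
\end{align}
whose numerator is exactly the quantity appearing in condition~(2). Simple, interlacing real roots guarantee that $D_r$ and $D_i$ never vanish at the same $\omega$, so the denominator stays positive and $\phi$ is smooth; the numerator is a Wronskian-type expression that, for two functions with interlacing simple zeros, keeps a constant sign. Hence condition~(2) at a single $\omega_0$ fixes that sign globally and forces $\phi' > 0$, i.e. $\phi$ increases monotonically, advancing by $\pi/2$ as $\omega$ passes each consecutive root of $D_r$ or $D_i$. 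Matching this monotone increment against the count from the previous step places every zero in the open left half-plane, and the converse runs the same equivalences backwards.

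The hard part, relative to the ordinary-polynomial Hermite--Biehler theorem, is that a quasi-polynomial has infinitely many zeros and the large-semicircle contribution no longer vanishes. The technical crux will therefore be the careful exploitation of the principal-term and commensurate-delay structure (equivalently, Pontryagin's zero-distribution theorem for $h(s, e^{-s\ell_1})$) to keep the argument-principle bookkeeping finite and to certify the correct total phase increment; once the principal term is verified to dominate, the phase-monotonicity argument above is essentially the classical one.
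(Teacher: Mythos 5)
First, a point of comparison: the paper does not prove this statement at all --- it is quoted as background from \cite{bhattacharyya2018linear} (Theorem 3.3 therein), i.e., it \emph{is} Pontryagin's Hermite--Biehler-type criterion for quasi-polynomials. So there is no in-paper proof to match against, and your proposal must stand on its own. Its outline --- $D$ and $D^{\ast}$ share zeros since $e^{s\ell_m}$ never vanishes, stability via the argument principle on a large right half-disk, and the identification of conditions (1)--(2) with strict monotonicity of the phase $\phi(\omega)=\arg\bigl(D_r(\omega)+jD_i(\omega)\bigr)$ --- is indeed the classical strategy behind this theorem, so the plan is right in spirit.

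However, two steps would fail as written. The central one is the claim that the numerator $D_i'D_r - D_iD_r'$ is ``a Wronskian-type expression that, for two functions with interlacing simple zeros, keeps a constant sign.'' That is false for general smooth functions: one can multiply $D_i$ by $1+g$ with $g$ a steep bump supported away from all zeros, preserving simplicity and interlacing of the zero sets while forcing the Wronskian to change sign. Constant sign under interlacing is a theorem about \emph{polynomials} (the Hermite--Kakeya--Obreschkoff circle of results) and extends to quasi-polynomials only through their specific entire-function structure --- Hadamard factorization together with the root-density asymptotics that Lemma~\ref{lemma:real-roots} (Theorem 3.4 of \cite{bhattacharyya2018linear}) encodes. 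So condition (2) at a single $\omega_0$ cannot be bootstrapped to global phase monotonicity by the argument you give; this is exactly where the companion root-counting result, which your proposal never invokes, is indispensable. Second, your principal-term dominance is argued only as $\Re(s)\to+\infty$, but the argument-principle bookkeeping needs control on the large semicircle and near the imaginary axis, where dominance genuinely fails under the stated hypotheses: since $\deg n_i = q$ is allowed (neutral-type terms), on $s=j\omega$ one has $|e^{s\ell_i}|=1$ and $|n_i(j\omega)|/|n_0(j\omega)|$ tending to a nonzero constant, so $e^{s\ell_m}n_0(s)$ need not dominate the remaining terms there. Deferring precisely these difficulties to ``Pontryagin's framework'' or ``Pontryagin's zero-distribution theorem'' concedes the substance of the result being proven; what remains of the proposal is the routine part of the argument.
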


\begin{lemma} [\cite{bhattacharyya2018linear}] (cf. theorem 3.4 on page 75 therein) \label{lemma:real-roots}
Let $M$ and $N$ denote the highest powers of $s$ and $e^{s}$, respectively, in the quasi-polynomial $D^{\ast}(s)$, and let $D^{\ast}(j \omega) = D_r(\omega) + j D_i(\omega)$. Let $\eta$ be an appropriate constant such that the coefficients of terms of the highest degree in $D_r(\omega)$ and $D_i(\omega)$ do not vanish at $\omega = \eta$. Then for the equations $D_r(\omega) = 0$ or $D_i(\omega) = 0$ to have only real roots, it is necessary and sufficient that in each of the intervals 
\begin{align*}
 - 2 l \pi + \eta \le \omega \le 2 l \pi + \eta, l = l_0, l_0 + 1, l_0 + 2, \cdots, 
\end{align*} 
 $D_r(\omega)$ or $D_i(\omega)$ has exactly $(4 l N + M)$ real roots for a sufficiently large $l_0$.

\end{lemma}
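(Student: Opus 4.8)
The plan is to treat this as a statement about the real zeros of a single real entire function of exponential type, proving it for $D_r$ (the argument for $D_i$ being identical up to the choice of $\eta$). First I would write out $D^{\ast}(j\omega)$ explicitly and separate real and imaginary parts. Since $D^{\ast}(s) = e^{s\ell_m}n_0(s) + \sum_{i=1}^{m} e^{s(\ell_m-\ell_i)} n_i(s)$ and the delays are commensurate, substituting $s=j\omega$ produces for $D_r(\omega)$ and $D_i(\omega)$ finite linear combinations of terms $\omega^a\cos(b\omega)$ and $\omega^a\sin(b\omega)$ with $0\le a\le M$ and $0\le b\le N$. The definitions of $M$ and $N$ as the highest powers of $s$ and $e^s$ single out a \emph{principal term} of the shape $c\,\omega^M\cos(N\omega)$ or $c\,\omega^M\sin(N\omega)$ with $c\ne 0$; the role of $\eta$ is precisely to place the evaluation point where this leading coefficient does not vanish. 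Thus $D_r$ extends to an entire function of $\omega\in\mathbb C$ of order one and exponential type $N$, real on the real axis, and carrying a definite symmetry under $\omega\mapsto-\omega$ (even for $D_r$, odd for $D_i$).

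Second, I would count the \emph{total} number of zeros (real and complex together) of $D_r$ inside a large rectangle $\Gamma_l$ with vertical sides at $\omega=\pm(2l\pi+\eta)$ and horizontal sides at $\Im\omega=\pm h$, via the argument principle. The key structural fact is that for $|\Im\omega|$ large the principal term dominates: writing $\cos(N\omega)$ and $\sin(N\omega)$ in exponentials, exactly one of $e^{\pm iN\omega}$ wins according to the sign of $\Im\omega$, so $D_r$ has no zeros with $|\Im\omega|>h$ for a suitable fixed $h$, and every zero is trapped in a horizontal strip. Computing the increment of $\arg D_r$ around $\Gamma_l$: along the horizontal sides $D_r\approx \tfrac{c}{2}\,\omega^M e^{\mp iN\omega}$, contributing the bulk of the winding, while along the vertical sides one gets an $O(1)$ correction accounting for the polynomial factor $\omega^M$. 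The outcome should be that the total number of zeros inside $\Gamma_l$ equals exactly $4lN+M$ for all large $l$, robustly in the sense that it is insensitive to the finite number of off-axis zeros.

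Third, I would compare this total with the number of real zeros in the segment $[-2l\pi+\eta,\,2l\pi+\eta]$. Because $D_r$ is real on $\mathbb R$, its non-real zeros occur in complex-conjugate pairs (and, by the $\omega\mapsto-\omega$ symmetry, in symmetric quadruples), all lying inside $\Gamma_l$ by the strip confinement. Hence the number of real zeros in the interval equals $4lN+M$ minus the number of non-real zeros inside $\Gamma_l$. Consequently $D_r$ has exactly $4lN+M$ real zeros in the interval if and only if it has no non-real zeros inside $\Gamma_l$; letting $l\to\infty$ exhausts the whole plane and yields the equivalence with ``$D_r(\omega)=0$ has only real roots.'' That the maximal achievable real count is precisely $4lN+M$ can be cross-checked against the principal term $c\,\omega^M\cos(N\omega)$, whose $2lN$ full oscillations on an interval of length $4l\pi$ give $\approx 4lN$ sign changes, plus $M$ zeros contributed by the factor $\omega^M$.

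The hard part will be the exact argument-principle bookkeeping in the second step: proving rigorously that all zeros are confined to a fixed horizontal strip (so a rectangle of bounded height captures every one of them), and pinning the total count down to \emph{exactly} $4lN+M$ rather than merely $4lN+M+O(1)$. Exactness matters because the gap between the real-zero count and this total is exactly the number of complex zeros, a bounded quantity that an $O(1)$ error in the winding would obscure. The estimate of $\arg D_r$ along the vertical sides at $\omega=\pm(2l\pi+\eta)$ must therefore be controlled uniformly as $l\to\infty$, with the endpoint shift $\eta$ chosen to keep the contour off the zeros. This uniform winding-number computation is the technical core of the argument, essentially the place where Pontryagin's original analysis of quasi-polynomial zeros does the real work.
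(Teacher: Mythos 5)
The paper never proves this lemma: it is quoted as a known result (Theorem 3.4 of the cited reference, which is Pontryagin's classical criterion for quasi-polynomials to have only real zeros), so there is no in-paper proof to compare against. Your plan is, in outline, the classical proof of that theorem: expand $D_r$ as a finite combination of $\omega^a\cos(b\omega)$, $\omega^a\sin(b\omega)$ with a dominating principal term, confine all zeros to a horizontal strip, count all zeros in large rectangles by the argument principle, and note that the real-root criterion is equivalent to the absence of non-real zeros because the rectangles exhaust the plane. That reduction (your third step) is sound, including the conjugate-pair observation and the multiplicity bookkeeping the argument principle provides, and your strip-confinement claim is provable exactly as you indicate.

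Two things need repair, however. First, a concrete error: you place the vertical sides of $\Gamma_l$ at $\omega = \pm(2l\pi+\eta)$, i.e.\ symmetric about $0$, whereas the lemma's interval $-2l\pi+\eta \le \omega \le 2l\pi+\eta$ is symmetric about $\eta$. This is not cosmetic. Take $D_r(\omega)=\sin\omega$, so $M=0$, $N=1$, and $\eta=\pi/2$: the interval $[-2l\pi+\eta,\,2l\pi+\eta]$ contains exactly $4l = 4lN+M$ zeros, but your symmetric interval $[-2l\pi-\eta,\,2l\pi+\eta]$ contains $4l+1$. The whole function of $\eta$ is to put \emph{both} vertical sides at abscissae congruent to $\eta$ modulo $2\pi$, where the leading trigonometric factor is bounded away from zero uniformly in $l$; with your placement the two sides sit at different phases and exactness is lost. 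Second, the decisive claim --- that the winding number around $\Gamma_l$ is \emph{exactly} $4lN+M$ rather than $4lN+M+O(1)$ --- is asserted (``the outcome should be\dots'') but never established, and as you yourself note, an $O(1)$ ambiguity is precisely what would hide a finite number of complex zeros, i.e.\ it would destroy the equivalence. So the proposal is the right strategy with the central estimate still open; carrying it out requires the uniform phase control on the vertical sides that only the corrected contour placement makes possible.
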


    \section{Main Results} \label{section:main-results}

\subsection{Control input and inter-vehicular spacing error propagation equation for ACC and CACC}

For the $i$-th following vehicle in the platoon, consider the following ACC/CACC input: 
\begin{align} \label{eq:control-input-cacc-case}
 u_i(t) = k_a a_{i - 1}(t) - k_v ( v_i(t) - v_{i - 1}(t) ) - k_p \delta_{i}(t), 
\end{align}
 where $k_v$, $k_p$ are positive and $k_a$ is nonnegative control gains to be designed, with $k_a=0$ for ACC and $k_a\neq 0$ for CACC.  Substituting~\eqref{eq:control-input-cacc-case} into~\eqref{eq:actual-vehicle-dynamics} and simplifying, the inter-vehicular spacing error propagation equation can be obtained as 
\begin{align} \label{eq:inter-vehicular-spacing-error-propagation-equation-cacc-case}
 \hat{\delta}_i(s) = H(s;\tau) \hat{\delta}_{i - 1}(s), 
\end{align}
 where $\hat{\delta}_i(s)$ is the Laplace transform of $\delta_i(t)$, $H(s;\tau) = \frac{\mathcal{N}(s)}{\mathcal{D}(s)}$, in which 
\begin{align*}
 \mathcal{N}(s) = k_a s^2 + k_v s + k_p, \ \mathcal{D}(s) = s^2 e^{\tau s} + \gamma s + k_p,
\end{align*}
where $\gamma = k_v + h_w k_p$.

\subsubsection{String stability analysis}

\begin{theorem} \label{theorem:string-stability-condition-ka-range}
For the platoon given by the inter-vehicular spacing error propagation equation~\eqref{eq:inter-vehicular-spacing-error-propagation-equation-cacc-case} to be robustly string stable, 
it is necessary that 
 \begin{align}\label{eq:ka_upper_bound}
  k_a < 1.   
 \end{align}
\end{theorem}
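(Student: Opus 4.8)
The plan is to characterize robust string stability spectrally and then read off the necessary condition from the high-frequency behaviour of the error propagation gain. Since $s=j\omega$ lives on the imaginary axis, the requirement $\Vert H(\cdot;\tau)\Vert_\infty \le 1$ is equivalent to $|H(j\omega;\tau)|\le 1$ for every $\omega\in\R$, i.e.\ to $|\mathcal{N}(j\omega)|^2 \le |\mathcal{D}(j\omega)|^2$ for all $\omega$. First I would compute the two moduli. Writing $\mathcal{N}(j\omega) = (k_p - k_a\omega^2) + j k_v\omega$ gives $|\mathcal{N}(j\omega)|^2 = (k_p - k_a\omega^2)^2 + k_v^2\omega^2$, while expanding $e^{j\tau\omega} = \cos(\tau\omega) + j\sin(\tau\omega)$ in $\mathcal{D}(j\omega) = -\omega^2 e^{j\tau\omega} + j\gamma\omega + k_p$ yields
\[
 |\mathcal{D}(j\omega)|^2 = \bigl(k_p - \omega^2\cos(\tau\omega)\bigr)^2 + \bigl(\gamma\omega - \omega^2\sin(\tau\omega)\bigr)^2 .
\]

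Next I would form the gap function $g(\omega) := |\mathcal{D}(j\omega)|^2 - |\mathcal{N}(j\omega)|^2$, whose non-negativity on $\R$ is exactly the string-stability requirement. After cancelling the common $k_p^2$ term and using $\cos^2+\sin^2=1$ on the leading $\omega^4$ contribution, $g$ takes the form
\[
 g(\omega) = (1 - k_a^2)\,\omega^4 - 2\gamma\,\omega^3\sin(\tau\omega) + c_2(\omega)\,\omega^2 ,
\]
where $c_2(\omega) = \gamma^2 - k_v^2 + 2k_a k_p - 2k_p\cos(\tau\omega)$ is bounded; crucially, the transcendental factor $e^{\tau s}$ enters only through the bounded quantities $\cos(\tau\omega)$, $\sin(\tau\omega)$, and only at orders $\omega^3$ and below. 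The decisive observation is that the top-order coefficient is $1-k_a^2$: letting $\omega\to\infty$, non-negativity of $g$ forces $1-k_a^2\ge 0$, i.e.\ $k_a\le 1$, since a genuinely negative leading coefficient would drive $g(\omega)\to-\infty$ and violate $g\ge 0$.

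It then remains to exclude the borderline value $k_a = 1$, which is where I expect the only real subtlety to lie, because the $\omega^4$ term vanishes and one must argue with the \emph{oscillatory} $\omega^3$ term rather than with a fixed-sign polynomial tail. Here I would exploit the freedom to evaluate $g$ along a frequency sequence tailored to the delay: fixing any $\tau\in(0,\tau_0]$ and taking $\omega_n = (\tfrac{\pi}{2} + 2\pi n)/\tau \to \infty$, one has $\sin(\tau\omega_n)=1$ and $\cos(\tau\omega_n)=0$, so with $k_a=1$ the gap collapses to $g(\omega_n) = -2\gamma\,\omega_n^3 + (\gamma^2 - k_v^2 + 2k_p)\,\omega_n^2$. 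Since $\gamma = k_v + h_w k_p > 0$, the cubic term dominates and $g(\omega_n)\to-\infty$, contradicting $g\ge 0$; hence $k_a=1$ is incompatible with string stability.

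Combining the two paragraphs gives the strict bound $k_a < 1$. I would close by remarking that the argument in fact refutes string stability for \emph{every} fixed $\tau\in(0,\tau_0]$ whenever $k_a\ge 1$, which is more than enough: robust string stability quantifies over all such $\tau$, so exhibiting failure at a single admissible delay already establishes necessity of $k_a<1$.
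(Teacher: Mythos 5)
Your proposal is correct and follows essentially the same route as the paper: both arguments split into the cases $k_a>1$ (where the high-frequency behaviour of $|H(j\omega;\tau)|^2$, equivalently the $\omega^4$ coefficient of your gap function $g$, forces a violation) and $k_a=1$ (where both proofs evaluate along the same frequency sequence $\tau\omega = \tfrac{\pi}{2}+2k\pi$ so that $\sin(\tau\omega)=1$, $\cos(\tau\omega)=0$, and the term $-2\gamma\omega^3$ produces the contradiction). The only differences are cosmetic — you work with the difference $|\mathcal{D}|^2-|\mathcal{N}|^2$ and let it diverge to $-\infty$, while the paper works with the ratio and exhibits an explicit integer $k$ making the gain exceed one.
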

 \begin{proof}
 Considering $H(s;\tau)$ defined in~\eqref{eq:inter-vehicular-spacing-error-propagation-equation-cacc-case}, we can first compute 
 \begin{align}
  &\vert H(j \omega; \tau) \vert^2 \nonumber \\
  &\quad = \frac{ k_a^2 \omega^4 + ( k_v^2 - 2 k_a k_p ) \omega^2 + k_p^2 }{ \omega^4 + \gamma^2 \omega^2 - 2 \gamma \omega^3 \sin(\tau \omega) + k_p^2 - 2 k_p \omega^2 \cos(\tau \omega) } \nonumber \\
  &\quad \ge \frac{ k_a^2 \omega^4 + ( k_v^2 - 2 k_a k_p ) \omega^2 + k_p^2 }{ \omega^4 + \gamma^2 \omega^2 + 2 \gamma \omega^3 + k_p^2 + 2 k_p \omega^2 }.
 \end{align}
 We will then conduct the proof by contradiction through the following two cases: \\
 \noindent Case 1: If $k_a > 1$, then for large $\omega$:
 \begin{align}
\lim\limits_{\omega \to \infty} \vert H(j \omega;\tau)
 \vert^2  
& \ge \lim\limits_{\omega \to \infty} \frac{k_a^2 \omega^4 + (k_v^2 - 2 k_a k_p) \omega^2 + k_p^2 }{ \omega^4 + 2 \gamma \omega^3 + (\gamma^2 + 2 k_p) \omega^2 + k_p^2 } \nonumber \\
 & = k_a^2 > 1.   
 \end{align}
This implies $\|H(j\omega;\tau)\|_{\infty} > 1$. Thus, $k_a \le 1$. \\
 
\noindent Case 2: If $k_a = 1$, then we have 
{\small
\begin{align*}
\vert H(j \omega; \tau) \vert^2 
= \frac{ \omega^4 + \left[k_v^2 - 2 k_p \right] \omega^2 + k_p^2 }{ \omega^4 + \left[ \gamma^2 - 2 \gamma \omega \sin(\tau \omega)  - 2 k_p \cos(\tau \omega) \right] \omega^2 + k_p^2}. 
\end{align*}
}
We will show that $\forall \tau \in (0, \tau_0]$, there exists an $\hat{\omega}$ such that $\tau\hat{\omega} = 2k\pi + \frac{\pi}{2}$ for some sufficiently large positive integer $k$ and $\|H(j\omega;\tau)\|_{\infty} > 1$. 
Now, choose an integer $k$ large enough such that 
\begin{align}\label{eq:k-bound}
 k > \frac{(\gamma^2 - k_v^2 + 2 k_p) \tau}{4 \pi \gamma} - \frac{1}{4}.
\end{align}
Define $\hat{\omega}$ as follows: 
\begin{align} \label{eq:omega-ka1}
\hat{\omega} &:= \frac{\pi}{2 \tau} + \frac{2 k \pi}{\tau}.
\end{align}
Note that
{\small
\begin{align*}
    \|H(j\omega;\tau)\|_{\infty} \ge \left| \frac{ \hat{\omega}^4 + \left[k_v^2 - 2 k_p \right] \hat{\omega}^2 + k_p^2 }{ \hat{\omega}^4 + \left[ \gamma^2 - 2 \gamma \hat{\omega} \sin(\tau \hat{\omega})  - 2 k_p \cos(\tau \hat{\omega}) \right] \hat{\omega}^2 + k_p^2} \right|.
\end{align*}
}
Since $\tau \hat{\omega} = \frac{\pi}{2} + 2 k \pi$, we have $\sin(\tau \hat{\omega}) = 1$ and $\cos(\tau \hat{\omega}) = 0$; substituting this, we get 
%
\begin{align*}
    \|H(j\omega;\tau)\|_{\infty} \ge \left| \frac{ \hat{\omega}^4 + \left[k_v^2 - 2 k_p \right] \hat{\omega}^2 + k_p^2 }{ \hat{\omega}^4 + \left[ \gamma^2 - 2 \gamma \hat{\omega}   \right] \hat{\omega}^2 + k_p^2} \right|.
\end{align*}
%
Now, substituting $\hat{\omega}$ as defined by~\eqref{eq:omega-ka1} into $\gamma^2 - 2 \gamma \hat{\omega}$ yields
\begin{align*}
    \gamma^2 - 2 \gamma \hat{\omega}   &= \gamma^2 - \frac{\pi\gamma}{\tau} - \frac{4 \pi \gamma k }{\tau}.
\end{align*}
In addition, substituting~\eqref{eq:k-bound} into the above, we get
\begin{align*}
    \gamma^2 - 2 \gamma \hat{\omega}   &< \gamma^2 - \frac{\pi\gamma}{\tau} - \frac{4 \pi \gamma }{\tau} \left( \frac{ ( \gamma^2 - k_v^2 + 2 k_p ) \tau }{4 \pi \gamma} - \frac{1}{4} \right) \\
    &= k_v^2 - 2k_p.
\end{align*}
Thus, if $k_a = 1$, there exists an $\omega$ such that $\| H(j \omega;\tau) \|_{\infty} > 1$.
On the basis of the discussions in the above two cases, in order to ensure $\Vert H(j \omega; \tau) \Vert_{\infty} \le 1$, we require $k_a < 1$. Therefore, the proof is completed. 
 \end{proof}

\begin{theorem} \label{theorem:string-stability-condition}
 Given any $k_a \in [0, 1)$, the vehicle platoon governed by the inter-vehicular spacing error propagation equation~\eqref{eq:inter-vehicular-spacing-error-propagation-equation-cacc-case} can be made robustly string stable for all $\tau \in (0, \tau_0]$ with feasible control gains $k_v$ and $k_p$, if the time headway satisfies 
  \begin{align} \label{eq:hw-lower-bound-theorem-3.2}
    h_w > \frac{2 \tau_0}{1 + k_a}.
  \end{align}
\end{theorem}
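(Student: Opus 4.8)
The plan is to peel the definition of robust string stability into its two constituent requirements --- internal (BIBO) stability of the error-propagation transfer function, so that $\|H(\cdot\,;\tau)\|_\infty$ is even meaningful, and the frequency-domain bound $|H(j\omega;\tau)|\le 1$ --- and then to show that the threshold $h_w>2\tau_0/(1+k_a)$ is exactly what makes a feasible gain pair $(k_v,k_p)$ available. First I would reduce the magnitude condition to a polynomial--trigonometric inequality. Starting from the expression for $|H(j\omega;\tau)|^2$ recorded in the proof of \Cref{theorem:string-stability-condition-ka-range}, the condition $|H(j\omega;\tau)|\le1$ becomes, after subtracting numerator from denominator and clearing the common factor $\omega^2$,
\[
f(\omega,\tau):=(1-k_a^2)\omega^2+\bigl(\gamma^2-k_v^2+2k_ak_p\bigr)-2\gamma\omega\sin(\tau\omega)-2k_p\cos(\tau\omega)\ge0,
\]
where $\gamma=k_v+h_wk_p$ and $\gamma^2-k_v^2=k_p\,(2h_wk_v+h_w^2k_p)$. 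Robust string stability then asks for $f\ge0$ for every $\omega$ and every $\tau\in(0,\tau_0]$.

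Next I would dominate the two oscillatory terms by estimates valid for all $\omega>0$ and all $\tau\in(0,\tau_0]$: from $\sin(\tau\omega)\le\tau\omega\le\tau_0\omega$ one gets $-2\gamma\omega\sin(\tau\omega)\ge-2\gamma\tau_0\omega^2$, and from $\cos(\tau\omega)\le1$ one gets $-2k_p\cos(\tau\omega)\ge-2k_p$, whence
\[
f(\omega,\tau)\ge\bigl[(1-k_a^2)-2\gamma\tau_0\bigr]\omega^2+\bigl[\gamma^2-k_v^2-2k_p(1-k_a)\bigr].
\]
Both bracketed quantities are independent of $\omega$ and $\tau$, so $f\ge0$ everywhere is secured by the two design inequalities $2\gamma\tau_0\le1-k_a^2$ and (after dividing $\gamma^2-k_v^2=k_p(2h_wk_v+h_w^2k_p)$ by $k_p>0$) $2h_wk_v+h_w^2k_p\ge2(1-k_a)$. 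These two estimates are exact as $\omega\to0$, precisely where the constant term is the binding one, so nothing is lost at the threshold.

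Then I would check feasibility. Writing $\Gamma:=(1-k_a^2)/(2\tau_0)$, the first inequality is $\gamma\le\Gamma$; since $k_v=\gamma-h_wk_p<\gamma$, the left side of the second equals $h_w(k_v+\gamma)<2h_w\gamma\le h_w(1-k_a^2)/\tau_0$, the bound being approached as $k_p\downarrow0$ and $\gamma\uparrow\Gamma$. Hence positive gains meeting both inequalities exist if and only if $h_w(1-k_a^2)/\tau_0>2(1-k_a)$, i.e. $h_w>2\tau_0/(1+k_a)$. Under the hypothesis I would therefore fix $\gamma$ slightly below $\Gamma$, take $k_p>0$ small, and set $k_v=\gamma-h_wk_p>0$; the slack in $h_w>2\tau_0/(1+k_a)$ guarantees both design inequalities, so $f\ge0$ for all $\omega$ and all $\tau\in(0,\tau_0]$.

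Finally, and this is where I expect the genuine work to lie, I would prove internal stability of $\mathcal D(s)=s^2e^{\tau s}+\gamma s+k_p$ for every $\tau\in(0,\tau_0]$ via the Pontryagin interlacing criterion. Multiplying by $e^{-\tau s}$ casts $\mathcal D$ into the form of \Cref{lemma:internal-stability} with $n_0(s)=s^2$, $n_1(s)=\gamma s+k_p$, $\ell_1=\tau$, and $\mathcal D^\ast=\mathcal D$; substituting $s=j\omega$ yields
\[
D_r(\omega)=k_p-\omega^2\cos(\tau\omega),\qquad D_i(\omega)=\omega\bigl(\gamma-\omega\sin(\tau\omega)\bigr).
\]
The crux is to verify the hypotheses of \Cref{lemma:internal-stability}: that $D_r$ and $D_i$ have only simple real roots --- for which I would invoke the counting test of \Cref{lemma:real-roots} on the windows $[-2l\pi+\eta,\,2l\pi+\eta]$ --- that these roots interlace, and that $D_i'(\omega_0)D_r(\omega_0)-D_i(\omega_0)D_r'(\omega_0)>0$ at a convenient $\omega_0$. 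Since $\gamma>0$ is bounded away from $0$ while $k_p>0$ is small (so that $\gamma>\tau_0k_p$, the leading margin of the first-order-lag surrogate $\tau s^3+s^2+\gamma s+k_p$), I expect the interlacing and positivity conditions to hold for the gains selected above. The main obstacle is this root-counting/interlacing verification, together with confirming that one and the same $(k_v,k_p)$ simultaneously satisfies the string-stability inequality and the internal-stability criterion.
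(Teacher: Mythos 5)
Your string-stability argument is correct and essentially identical to the paper's proof: the same bounds $\sin(\tau\omega)\le\tau\omega$ and $\cos(\tau\omega)\le 1$, the same reduction to the two design inequalities $2\gamma\tau_0\le 1-k_a^2$ and $\gamma^2-k_v^2\ge 2k_p(1-k_a)$, and the same feasibility analysis showing the constraint sets intersect exactly when $h_w>2\tau_0/(1+k_a)$. The one point of divergence is scope: the internal-stability verification you flag as ``where the genuine work lies'' and leave as an unfinished sketch is not part of this theorem in the paper --- the paper's definition of robust string stability is only the bound $\Vert H(s;\tau)\Vert_\infty\le 1$, and the inclusion of internal stability is stated and proved separately (Theorem~\ref{theorem:the-inclusion-of-internal-stability-in-string-stability-cacc-case}, Appendix~\ref{Appendix_A}), via precisely the Pontryagin root-counting and interlacing analysis you anticipate. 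So your proposal does prove the statement as the paper formulates it, and the incomplete portion corresponds to work the paper also defers to a separate result.
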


\begin{proof}
 First, $\vert H(j \omega;\tau) \vert^2 \le 1$ is equivalent to 
\begin{align}
 & k_a^2 \omega^4 + ( k_v^2 - 2 k_a k_p ) \omega^2 + k_p^2 \nonumber \\
&\quad \le \omega^4 - 2 \gamma \omega^3 \sin(\tau \omega) + \gamma^2 \omega^2 - 2 k_p \omega^2 \cos(\tau \omega) + k_p^2. \nonumber
\end{align}
Since $\sin(\tau \omega) \le \tau \omega$ and $\cos(\tau \omega) \le 1$, $\forall \omega \ge 0$, the above inequality is satisfied if
\begin{align}\label{eq:hinf_cond1}
 ( 1 - k_a^2 - 2 \gamma \tau ) \omega^2 + \gamma^2 - 2 k_p + 2 k_a k_p - k_v^2 \ge 0. 
 \end{align}
The inequality~\eqref{eq:hinf_cond1} is satisfied if 
 \begin{subnumcases}{} 
 1 - k_a^2 - 2 \gamma \tau \ge 0, \label{eq:suff-cond-subcase1}\\
 \gamma^2 - 2 k_p + 2 k_a k_p - k_v^2 \ge 0. \label{eq:suff-cond-subcase2}
 \end{subnumcases}
Inequalities \eqref{eq:suff-cond-subcase1} and \eqref{eq:suff-cond-subcase2} hold for all $\tau \in (0, \tau_0]$ iff
 \begin{subnumcases}{}
 \gamma \le \frac{1 - k_a^2}{2 \tau_0}, \label{eq:quadratic-inequality-condition-1-cacc} \\
  \gamma \ge \sqrt{ 2 k_p (1 - k_a) + k_v^2 }.  \label{eq:quadratic-inequality-condition-2-cacc} 
 \end{subnumcases}
Substituting $\gamma = k_v + h_w k_p$ into~\eqref{eq:quadratic-inequality-condition-1-cacc}, we have
 \begin{align}
 k_v + h_w k_p \le \frac{1 - k_a^2}{2 \tau_0},
 \end{align}
 which gives the first admissible set of $k_v$ and $k_p$ as 
 \begin{align} \label{eq:admissible-region-kv-kp-1-cacc-case}
 \mathcal{S}_{1}(k_v, k_p) \coloneqq \left\{ k_v > 0, k_p > 0, \frac{k_v}{a_1} + \frac{k_p}{b_1} \le 1 \right\},
 \end{align}
where $$a_1 = \frac{1 - k_a^2}{2 \tau_0}, \ b_1 = \frac{a_1}{h_w}.$$ 
Substituting $\gamma = k_v + h_w k_p$ into~\eqref{eq:quadratic-inequality-condition-2-cacc} and simplifying, we obtain 
 \begin{align}
 2 k_v h_w + h_w^2 k_p \ge 2 (1 - k_a), 
 \end{align}
 from which we obtain the second admissible set of $k_v$ and $k_p$ as 
 \begin{align} \label{eq:admissible-region-kv-kp-2-cacc-case}
 \mathcal{S}_{2}(k_v, k_p) \coloneqq \left\{ k_v > 0, k_p > 0, \frac{k_v}{a_2} + \frac{k_p}{b_2} \ge 1 \right\},
 \end{align}
 where $$a_2 = \frac{1 - k_a}{h_w}, \ b_2 = \frac{2 a_2}{h_w}.$$
Then, combining~\eqref{eq:admissible-region-kv-kp-1-cacc-case} and~\eqref{eq:admissible-region-kv-kp-2-cacc-case}, the admissible region of $k_v$ and $k_p$ for the inequalities~\eqref{eq:quadratic-inequality-condition-1-cacc} and~\eqref{eq:quadratic-inequality-condition-2-cacc} is given by
\begin{align} \label{eq:admissible-region-kv-kp-cacc-case}
 \mathcal{S}(k_v, k_p) = \mathcal{S}_{1}(k_v, k_p) \cap \mathcal{S}_{2}(k_v, k_p).   
\end{align}
For feasibility, we need to ensure that $\mathcal{S}(k_v, k_p)$ is non-empty. For this, we need $a_2 < a_1$, i.e., 
\begin{align}
 \frac{1 - k_a}{h_w} < \frac{1 - k_a^2}{2 \tau_0},
\end{align}
 from which we obtain~\eqref{eq:hw-lower-bound-theorem-3.2}. Therefore, the proof is completed.  
\end{proof}

 \subsubsection{Internal stability analysis}

 \begin{theorem} \label{theorem:the-inclusion-of-internal-stability-in-string-stability-cacc-case}
 The vehicle platoon governed by the inter-vehicular spacing error propagation equation~\eqref{eq:inter-vehicular-spacing-error-propagation-equation-cacc-case} is internally stable if the robust string stability conditions on $k_a$, $h_w$, $k_v$ and $k_p$ obtained from {\rm \textbf{Theorem~\ref{theorem:string-stability-condition-ka-range}}} and {\rm \textbf{Theorem~\ref{theorem:string-stability-condition}}} are satisfied. 
 \end{theorem}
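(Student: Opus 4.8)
The plan is to establish internal stability by applying Pontryagin's interlacing criterion (\textbf{Lemma~\ref{lemma:internal-stability}}) to the characteristic quasi-polynomial of the closed-loop system, which is precisely the denominator $\mathcal{D}(s) = s^2 e^{\tau s} + \gamma s + k_p$ appearing in the error propagation transfer function. Rewriting in the commensurate-delay form of the lemma, I would set $D(s) = \gamma s + k_p + e^{\tau s}\, s^2$, so that $n_0(s) = \gamma s + k_p$ has degree $q = 1$ and the single delayed term has $n_1(s) = s^2$ with $\deg n_1 = 2$; here one must be careful, because the lemma requires $\deg n_i \le \deg n_0$, which fails. The correct normalization is therefore to multiply through by $e^{-\tau s}$ or, equivalently, to form $D^{\ast}(s) = e^{\tau s}\mathcal{D}(s)$ directly and treat $s^2$ as the leading (undelayed) polynomial $n_0$; I would take $n_0(s) = s^2$ of degree $q = 2$ and $n_1(s) = \gamma s + k_p$ of degree $1 \le 2$, with a single delay $\ell_1 = \tau$, so hypotheses (1)--(3) of \textbf{Lemma~\ref{lemma:internal-stability}} are met.

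Next I would compute $D^{\ast}(j\omega) = e^{j\tau\omega}(-\omega^2) + \gamma j\omega + k_p$ and separate real and imaginary parts. Using $e^{j\tau\omega} = \cos(\tau\omega) + j\sin(\tau\omega)$, this gives
\begin{align*}
D_r(\omega) &= -\omega^2 \cos(\tau\omega) + k_p, \\
D_i(\omega) &= -\omega^2 \sin(\tau\omega) + \gamma\omega.
\end{align*}
The strategy is then to verify the two conditions of the lemma: first that $D_r$ and $D_i$ have only simple real roots which interlace, and second the positivity of the Wronskian-type quantity $D_i'(\omega_0)D_r(\omega_0) - D_i(\omega_0)D_r'(\omega_0)$ at some $\omega_0$. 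For the reality and counting of roots I would invoke \textbf{Lemma~\ref{lemma:real-roots}}, identifying $M = 2$ (highest power of $s$) and $N = 1$ (highest power of $e^{s}$ after rescaling the delay), and choosing a suitable $\eta$ so that the leading coefficients of $D_r$ and $D_i$ do not vanish; one then checks that each interval $[-2l\pi + \eta,\, 2l\pi + \eta]$ contains exactly $4lN + M = 4l + 2$ real roots for large $l$, which follows from the dominant oscillatory terms $-\omega^2\cos(\tau\omega)$ and $-\omega^2\sin(\tau\omega)$ whose sign changes are controlled by the cosine and sine.

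The crucial link to the string-stability hypotheses is that the inequalities~\eqref{eq:quadratic-inequality-condition-1-cacc} and~\eqref{eq:quadratic-inequality-condition-2-cacc} — guaranteed by \textbf{Theorem~\ref{theorem:string-stability-condition}} — constrain $\gamma$, $k_p$, and $\tau_0$ in exactly the way needed to force genuine interlacing and the correct root count. I expect the main obstacle to be the interlacing verification: one must show that between consecutive roots of $D_r$ there lies exactly one root of $D_i$ (and vice versa), and that no roots coincide or become multiple. The natural approach is to observe that at a root $\omega^*$ of $D_i$ one has $\sin(\tau\omega^*) = \gamma/\omega^*$ for $\omega^* \ne 0$, substitute into $D_r$, and use the bound $\gamma \le (1 - k_a^2)/(2\tau_0) \le 1/(2\tau_0)$ together with $\gamma \ge \sqrt{2k_p(1-k_a) + k_v^2} \ge \sqrt{2k_p(1-k_a)}$ to pin down the sign of $D_r$ at those points, thereby confirming that $D_r$ changes sign exactly once between consecutive zeros of $D_i$. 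Finally, the positivity condition (2) can be checked at $\omega_0 = 0$, where $D_r(0) = k_p > 0$, $D_i(0) = 0$, and $D_i'(0) = \gamma > 0$, giving $D_i'(0)D_r(0) - D_i(0)D_r'(0) = \gamma k_p > 0$; this completes the application of the interlacing theorem and hence establishes internal stability.
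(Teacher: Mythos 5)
Your setup coincides with the paper's: normalize the quasi-polynomial so that $n_0(s)=s^2$ and $n_1(s)=\gamma s+k_p$ with the single delay $\tau$, apply Lemma~\ref{lemma:internal-stability} to $\mathcal{D}(s)=s^2e^{\tau s}+\gamma s+k_p$, identify $M=2$, $N=1$ (so $4lN+M=4l+2$ roots per window in Lemma~\ref{lemma:real-roots}), and verify condition (2) at $\omega_0=0$, where indeed $D_i'(0)D_r(0)-D_i(0)D_r'(0)=\gamma k_p>0$. All of that is correct. The gap is that the two items you leave as ``strategies'' are the entire substance of the proof, and your sketches miss the key quantitative ingredient. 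The reality of all roots does \emph{not} simply follow from the dominant oscillatory terms: whether $D_r(\omega)=k_p-\omega^2\cos(\tau\omega)$ and $D_i(\omega)=\gamma\omega-\omega^2\sin(\tau\omega)$ retain the full count of $4l+2$ real roots depends on the sizes of $\tau^2k_p$ and $\tau\gamma$; if these are too large, roots near the origin are lost (become complex) and Pontryagin's criterion fails, which is precisely how string instability and internal instability are linked here. You extract $\tau\gamma\le\frac{1}{2}$ from \eqref{eq:quadratic-inequality-condition-1-cacc}, but you never derive any bound on $k_p$. The paper's proof hinges on combining \eqref{eq:quadratic-inequality-condition-1-cacc} and \eqref{eq:quadratic-inequality-condition-2-cacc} to get $\tau^2k_p<\frac{(1-k_a)(1+k_a)^2}{8}\le\frac{4}{27}$, and this bound is what drives the interval-by-interval localization of the roots of $D_r$ (one in $(0,\pi/4)$, one in $(\pi/4,\pi/2)$, none in $(\pi/2,3\pi/2)$, and so on periodically), from which both the exact root count and the interlacing for all but the first roots follow.

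Your proposed interlacing argument also has a concrete flaw: at a zero $\omega^*$ of $D_i$, the relation $\sin(\tau\omega^*)=\gamma/\omega^*$ only determines $\cos(\tau\omega^*)=\pm\sqrt{1-\gamma^2/\omega^{*2}}$ up to sign, and resolving that sign requires exactly the interval localization you skipped; so the substitution cannot by itself ``pin down the sign of $D_r$.'' Moreover, the genuinely delicate case is the pair of roots nearest the origin, i.e., showing $\theta_{r,1}<\theta_{i,2}$ in the scaled variable $\theta=\tau\omega$. There the paper needs a separate argument: bound $\theta_{r,1}$ above using $\cos\theta>1-\theta^2/2$, bound $\theta_{i,2}$ below using $\sin\theta<\theta$ (giving $\theta_{i,2}>\sqrt{\tau\gamma}$), and then prove the inequality $1-\sqrt{1-2\tau^2k_p}<\tau\gamma$, which again requires both string-stability inequalities together with the bound on $\tau^2k_p$. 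Your sketch does not identify this case, yet it is the one place where interlacing would actually fail if $h_w$ were below the threshold of Theorem~\ref{theorem:string-stability-condition} — so it cannot be waved through. In short: right approach and correct skeleton, but the deferred verifications are the theorem, and they cannot be completed without the missing bound on $\tau^2k_p$ and the first-root comparison.
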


\begin{proof}
 See~{\bf Appendix~\ref{Appendix_A}}. 
\end{proof}

 \subsection{CACC+ based platooning and inter-vehicular spacing error propagation equation}
 
 For the $i$-th following vehicle in the platoon, consider the following CACC+ input:
 \begin{align}
 u_i(t) = \sum\limits_{j = 1}^r \left[ k_{a j} a_{i - j}(t) - k_{v j} ( v_i(t) - v_{i - j}(t) ) \right. \nonumber \\
\left. - k_{p j} ( x_i(t) - x_{i - j}(t) + d_j + j h_w v_i(t) ) \right], 
 \end{align}
 where $k_{aj}$, $k_{vj}$, $k_{pj}$ are control gains, $d_j = j d$, $r$ denotes the number of predecessor vehicles whose motion information is available to vehicle $i$. Then the inter-vehicular spacing error propagation equation is given by
 \begin{align} \label{eq:inter-vehicular-spacing-error-propagation-equation-cacc-plus-case}
 \hat{\delta}_i(s) = \sum\limits_{j = 1}^r H_{j}(s; \tau) \hat{\delta}_{i - j}(s),
 \end{align}
 where 
 $$H_j(s; \tau) = \frac{k_{a j} s^2 + k_{v j} s + k_{p j} }{ s^2 e^{\tau s} + \sum\limits_{j = 1}^r ( k_{v j} + j k_{p j} h_w ) s + \sum\limits_{j = 1}^r k_{p j} }.
 $$ 
 Considering identical control gains for each vehicle, $k_{a j} = k_a$, $k_{v j} = k_v$, $k_{p j} = k_p$, we have 
 $$H_j(s; \tau) = \frac{k_a s^2 + k_v s + k_p}{ s^2 e^{\tau s} + \left( r k_v + \frac{ r ( r + 1 ) }{2} k_p h_w \right) s + r k_p }.
 $$

 \subsubsection{String stability analysis}

 \begin{theorem} \label{theorem:string-stability-condition-ka-range-cacc-plus-case}
 For the vehicle platoon described by the inter-vehicular spacing error propagation equation~\eqref{eq:inter-vehicular-spacing-error-propagation-equation-cacc-plus-case} to be robustly string stable, we require 
 \begin{align}
 k_a < \frac{1}{r}. 
 \end{align} 
\end{theorem}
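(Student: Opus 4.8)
The plan is to mirror the argument of \textbf{Theorem~\ref{theorem:string-stability-condition-ka-range}} (the CACC case), exploiting the fact that under identical control gains every $H_j$ collapses to a single transfer function. First I would note that with $k_{aj}=k_a$, $k_{vj}=k_v$, $k_{pj}=k_p$ the functions $H_j(s;\tau)$ in~\eqref{eq:inter-vehicular-spacing-error-propagation-equation-cacc-plus-case} are all equal to one common $H(s;\tau)$, so the propagation equation becomes $\hat{\delta}_i(s) = H(s;\tau)\sum_{j=1}^r \hat{\delta}_{i-j}(s)$. The worst-case error amplification, realised when the $r$ upstream errors are in phase and of equal magnitude, is therefore governed by the effective transfer function $r\,H(s;\tau)$, and robust string stability demands $r\,|H(j\omega;\tau)|\le 1$ for every $\omega$ and every $\tau\in(0,\tau_0]$.

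Next I would compute $|H(j\omega;\tau)|^2$ exactly as in the CACC case. The numerator is again $k_a^2\omega^4+(k_v^2-2k_ak_p)\omega^2+k_p^2$, while the denominator has the same shape as before with $\gamma$ replaced by $A:=r k_v+\tfrac{r(r+1)}{2}k_p h_w$ and the constant $k_p$ replaced by $B:=r k_p$. Bounding the denominator from above via $\sin(\tau\omega)\ge -1$ and $\cos(\tau\omega)\ge -1$ and then letting $\omega\to\infty$ gives $\lim_{\omega\to\infty}|H(j\omega;\tau)|^2=k_a^2$, whence $\lim_{\omega\to\infty} r^2\,|H(j\omega;\tau)|^2=r^2k_a^2$.

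The proof then splits into two contradiction cases. In Case~1, if $k_a>1/r$ then $r^2k_a^2>1$, so $r\,|H(j\omega;\tau)|>1$ for all sufficiently large $\omega$, violating the string stability requirement. In Case~2, $k_a=1/r$, the high-frequency limit equals $1$ and a finer estimate is needed: substituting $k_a=1/r$ and $B=rk_p$ makes the leading and constant terms of numerator and denominator agree, reducing $r^2|H(j\omega;\tau)|^2>1$ to $A^2-2A\omega<r^2k_v^2-2rk_p$. Choosing $\hat{\omega}$ as in~\eqref{eq:omega-ka1} so that $\sin(\tau\hat{\omega})=1,\ \cos(\tau\hat{\omega})=0$, and taking the integer $k$ large enough in the spirit of~\eqref{eq:k-bound}, namely $k>\tfrac{(A^2-r^2k_v^2+2rk_p)\tau}{4\pi A}-\tfrac14$, drives $A^2-2A\hat{\omega}$ below $r^2k_v^2-2rk_p$ and secures the strict inequality, again contradicting string stability. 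Combining the two cases yields $k_a<1/r$.

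The main obstacle I anticipate is the very first step: justifying that the relevant robust string stability condition for the $r$-predecessor look-ahead topology is $r\,|H(j\omega;\tau)|\le 1$ rather than $|H(j\omega;\tau)|\le 1$. One must argue that the in-phase alignment of the $r$ upstream errors is an admissible, and indeed the worst-case, disturbance pattern so that the factor $r$ genuinely enters the supremum; this is exactly the mechanism that sharpens the CACC bound $k_a<1$ into the CACC+ bound $k_a<1/r$. Once this reduction is established, the remaining estimates are a direct transcription of the \textbf{Theorem~\ref{theorem:string-stability-condition-ka-range}} computation under the substitutions $\gamma\mapsto A$ and $k_p\mapsto r k_p$.
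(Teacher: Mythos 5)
Your Cases 1 and 2 are computationally sound, and they are in substance exactly the paper's argument: the paper defines $\tilde{k}_a := r k_a$, $\tilde{k}_v := r k_v$, $\tilde{k}_p := r k_p$, $\tilde{h}_w := \frac{1+r}{2} h_w$, observes that $r H_j$ then has precisely the form of the CACC transfer function in~\eqref{eq:inter-vehicular-spacing-error-propagation-equation-cacc-case} (since $\tilde{k}_v + \tilde{k}_p \tilde{h}_w = r k_v + \frac{r(r+1)}{2} k_p h_w$), and invokes Theorem~\ref{theorem:string-stability-condition-ka-range} as a black box to conclude $\tilde{k}_a < 1$. Your substitutions $\gamma \mapsto A$, $k_p \mapsto r k_p$ simply re-derive those same inequalities by hand; correct, but less economical than the rescaling.

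The genuine gap is the step you yourself flag: justifying $r\,\Vert H_j \Vert_{\infty} \le 1$ as the operative robust string stability requirement via an ``in-phase worst case.'' That argument cannot be completed as stated, because the upstream errors $\hat{\delta}_{i-1}, \ldots, \hat{\delta}_{i-r}$ are not free signals; they are themselves produced by the same propagation recursion (with boundary effects at the head of the platoon), so you may not posit that all $r$ of them are equal and in phase at the critical frequency. Concretely, at a frequency where $H_j(j\omega)$ has phase near $\pi$, consecutive errors de-phase and the factor $r$ need not be realized: for the recursion $\hat{\delta}_i = H(\hat{\delta}_{i-1} + \hat{\delta}_{i-2})$ with $H(j\omega_0) = -0.6$ one has $r\vert H \vert = 1.2 > 1$, yet the characteristic roots of $z^2 + 0.6 z + 0.6$ both have modulus $\sqrt{0.6} < 1$, so errors at that frequency are attenuated along the string. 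The paper sidesteps this entirely: it does not attempt to prove that $r\Vert H_j \Vert_{\infty} \le 1$ is necessary for string stability, but imports it from~\cite{10.1115/1.2802497} as the sufficient condition being enforced, and then shows (via the rescaling and Theorem~\ref{theorem:string-stability-condition-ka-range}) that this condition can hold only if $k_a < 1/r$ --- i.e., the necessity asserted in the theorem is relative to that design condition. Replacing your worst-case realizability argument with this citation-plus-rescaling step closes the proof; as written, your first step is the one part that would fail under scrutiny.
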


\begin{proof}
 First, according to~\cite{10.1115/1.2802497}, the string stability of~\eqref{eq:inter-vehicular-spacing-error-propagation-equation-cacc-plus-case} can be ensured if 
 \begin{align}
 \Vert r H_j(s; \tau) \Vert_{\infty} &= \left\Vert \frac{ r k_a s^2 + r k_v s + r k_p }{ s^2 e^{\tau s} + ( r k_v + \frac{ r ( 1 + r ) }{2} k_p h_w ) s + r k_p } \right\Vert_{\infty} \nonumber \\
 &\le 1. 
 \end{align}
Define $\tilde{k}_a := r k_a$, $\tilde{k}_v := r k_v$, $\tilde{k}_p := r k_p$, $\tilde{h}_w := \frac{ 1 + r }{2} h_w$, then the above inequality can be rewritten as 
 \begin{align} \label{eq:string-stability-condition-cacc-plus-case-with-identical-control-gains}
 \left\Vert \frac{ \tilde{k}_a s^2 + \tilde{k}_v s + \tilde{k}_p }{ s^2 e^{\tau s} + ( \tilde{k}_v + \tilde{k}_p \tilde{h}_w ) s + \tilde{k}_p } \right\Vert_{\infty} \le 1. 
 \end{align}
Thus, based on~{\rm \textbf{Theorem~\ref{theorem:string-stability-condition-ka-range}}}, we have
$\tilde{k}_a < 1$, that is, $r k_a < 1$, or $k_a < 1/r$. Thus, the proof is completed. 
\end{proof}

\begin{theorem} \label{theorem:string-stability-condition-cacc-plus-case}
 Given any $k_a \in [0, 1/r)$, the vehicle platoon described by~\eqref{eq:inter-vehicular-spacing-error-propagation-equation-cacc-plus-case} can be made robustly string stable for all $\tau \in (0, \tau_0]$ with feasible control gains $k_v$ and $k_p$, if the time headway satisfies 
 \begin{align}\label{eq:time-headway-cacc+}
 h_w > \frac{ 4 \tau_0 }{ ( 1 + r ) ( 1 + r k_a ) }.
 \end{align}
\end{theorem}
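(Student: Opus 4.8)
The plan is to exploit the reduction already established in the proof of Theorem~\ref{theorem:string-stability-condition-ka-range-cacc-plus-case}. There, the string stability requirement for the CACC+ propagation equation~\eqref{eq:inter-vehicular-spacing-error-propagation-equation-cacc-plus-case} was shown (via~\cite{10.1115/1.2802497}) to be guaranteed by $\Vert r H_j(s;\tau)\Vert_\infty \le 1$, and under the substitution $\tilde{k}_a := r k_a$, $\tilde{k}_v := r k_v$, $\tilde{k}_p := r k_p$, $\tilde{h}_w := \frac{1+r}{2} h_w$ this becomes exactly condition~\eqref{eq:string-stability-condition-cacc-plus-case-with-identical-control-gains}, which carries the identical algebraic form as the CACC transfer function treated in Theorem~\ref{theorem:string-stability-condition}. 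The key idea is therefore that the entire argument of Theorem~\ref{theorem:string-stability-condition} applies verbatim to the tilde'd quantities, with $\gamma$ replaced by $\tilde{\gamma} := \tilde{k}_v + \tilde{h}_w \tilde{k}_p$; in particular the only analytic facts used there, namely $\sin(\tau\omega)\le \tau\omega$ and $\cos(\tau\omega)\le 1$, are universal and do not depend on the renaming.

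First I would invoke Theorem~\ref{theorem:string-stability-condition} under the replacement $(k_a, k_v, k_p, h_w) \mapsto (\tilde{k}_a, \tilde{k}_v, \tilde{k}_p, \tilde{h}_w)$. This yields that the tilde'd system admits feasible gains $\tilde{k}_v, \tilde{k}_p > 0$ rendering the norm bound~\eqref{eq:string-stability-condition-cacc-plus-case-with-identical-control-gains} valid for all $\tau \in (0,\tau_0]$ precisely when
\begin{align*}
\tilde{h}_w > \frac{2\tau_0}{1 + \tilde{k}_a}.
\end{align*}
The admissibility hypothesis $\tilde{k}_a \in [0,1)$ required by Theorem~\ref{theorem:string-stability-condition} is exactly $r k_a < 1$, i.e. $k_a \in [0, 1/r)$, which is the standing assumption of the present statement.

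Second I would substitute $\tilde{h}_w = \frac{1+r}{2} h_w$ and $\tilde{k}_a = r k_a$ into the displayed bound and rearrange, obtaining
\begin{align*}
\frac{1+r}{2}\, h_w > \frac{2\tau_0}{1 + r k_a} \quad\Longleftrightarrow\quad h_w > \frac{4\tau_0}{(1+r)(1+r k_a)},
\end{align*}
which is precisely~\eqref{eq:time-headway-cacc+}.

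The only step requiring a little care — and the one I would treat as the main (if minor) obstacle — is transporting feasibility back from the tilde'd gains to the original gains. Theorem~\ref{theorem:string-stability-condition} produces a nonempty admissible region $\mathcal{S}(\tilde{k}_v, \tilde{k}_p)$, the intersection of the two half-planes $\mathcal{S}_1$ and $\mathcal{S}_2$. Since $r$ is a fixed positive integer, the map $(k_v, k_p) \mapsto (\tilde{k}_v, \tilde{k}_p) = (r k_v, r k_p)$ is a positive linear bijection of the open first quadrant onto itself; hence nonemptiness of the admissible region in the tilde'd variables is equivalent to nonemptiness in the original $(k_v, k_p)$. This confirms that feasible gains $k_v, k_p > 0$ exist exactly under~\eqref{eq:time-headway-cacc+}, completing the argument.
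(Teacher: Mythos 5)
Your proposal is correct and follows essentially the same route as the paper: reduce to the CACC case via the substitution $\tilde{k}_a = r k_a$, $\tilde{k}_v = r k_v$, $\tilde{k}_p = r k_p$, $\tilde{h}_w = \frac{1+r}{2} h_w$, invoke Theorem~\ref{theorem:string-stability-condition}, and rearrange the resulting bound on $\tilde{h}_w$ into~\eqref{eq:time-headway-cacc+}. Your explicit remark that the scaling $(k_v,k_p)\mapsto(rk_v,rk_p)$ is a bijection of the open first quadrant is exactly what the paper handles implicitly by rewriting the admissible sets $\tilde{\mathcal{S}}_1,\tilde{\mathcal{S}}_2$ directly in the original $(k_v,k_p)$ coordinates with the factor $\frac{1}{r}$ on the right-hand side.
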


\begin{proof}
Applying the result in~{\rm \textbf{Theorem~\ref{theorem:string-stability-condition}}} to \eqref{eq:string-stability-condition-cacc-plus-case-with-identical-control-gains}, we obtain 
\begin{align} \label{eq:range-of-tilde-hw}
 \tilde{h}_w > \frac{2 \tau_0}{ 1 + \tilde{k}_a }.
\end{align}
Substituting $\tilde{h}_w = \frac{1+r}{2} h_w$ and $\tilde{k}_a = r k_a$ and simplifying, we obtain \eqref{eq:time-headway-cacc+}.

In addition, the admissible region of the control gains $k_v$ and $k_p$ is given by
 \begin{align} \label{eq:kv-kp-admissible-region-cacc+}
 \tilde{\mathcal{S}}(k_v, k_p) = \tilde{\mathcal{S}}_1(k_v, k_p) \cap \tilde{\mathcal{S}}_2(k_v, k_p),
 \end{align}
 where $\tilde{\mathcal{S}}_1(k_v, k_p)$ and $\tilde{\mathcal{S}}_2(k_v, k_p)$ are respectively defined as 
 \begin{align}
 \tilde{\mathcal{S}}_1(k_v, k_p) \coloneqq \left\{ k_v > 0, k_p > 0, \frac{k_v}{\tilde{a}_1} + \frac{k_p}{\tilde{b}_1} \le \frac{1}{r} \right\},
 \end{align}
 \begin{align}
 \tilde{\mathcal{S}}_2(k_v, k_p) \coloneqq \left\{ k_v > 0, k_p > 0, \frac{k_v}{\tilde{a}_2} + \frac{k_p}{\tilde{b}_2} \ge \frac{1}{r} \right\},
 \end{align}
 in which, $\tilde{a}_1$, $\tilde{b}_1$ and $\tilde{a}_2$, $\tilde{b}_2$ are respectively given by
 \begin{align}
 \tilde{a}_1 = \frac{ 1 - r^2 k_a^2 }{ 2 \tau_0 }, \ \tilde{b}_1 = \frac{\tilde{a}_1}{\tilde{h}_w}, \ \tilde{a}_2 = \frac{ 1 - r k_a }{ \tilde{h}_w }, \ \tilde{b}_2 = \frac{2 \tilde{a}_2}{\tilde{h}_w}.
\end{align}  
 Based on the result for the CACC case, if $\tilde{h}_w$ satisfies~\eqref{eq:range-of-tilde-hw}, i.e. $h_w$ satisfies~\eqref{eq:time-headway-cacc+}, then the admissible region of $k_v$ and $k_p$ given by~\eqref{eq:kv-kp-admissible-region-cacc+} is non-empty. Therefore, the proof is completed. 
\end{proof}

 \subsubsection{Internal stability analysis}

 \begin{theorem} \label{theorem:the-inclusion-of-internal-stability-in-string-stability-cacc-plus-case} 
 The robust string stability conditions derived in~{\rm \textbf{Theorem~\ref{theorem:string-stability-condition-ka-range-cacc-plus-case}}} and~{\rm \textbf{Theorem~\ref{theorem:string-stability-condition-cacc-plus-case}}} can ensure internal stability of the vehicle platoon described by the inter-vehicular spacing error propagation equation~\eqref{eq:inter-vehicular-spacing-error-propagation-equation-cacc-plus-case}.  
\end{theorem}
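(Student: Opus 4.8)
The plan is to reduce the CACC+ internal-stability question to the already-settled CACC case by exploiting the structural identity of the two closed-loop characteristic quasi-polynomials under the change of variables introduced in the proof of Theorem~\ref{theorem:string-stability-condition-cacc-plus-case}. Internal stability of the platoon governed by~\eqref{eq:inter-vehicular-spacing-error-propagation-equation-cacc-plus-case} is equivalent to requiring that every root of the common denominator
\begin{align*}
 \mathcal{D}_{+}(s) = s^2 e^{\tau s} + \left( r k_v + \frac{r(r+1)}{2} k_p h_w \right) s + r k_p
\end{align*}
of the transfer functions $H_j(s;\tau)$ lie in the open left half-plane for all $\tau \in (0,\tau_0]$; the numerators play no role here, since the $H_j$ share the same poles. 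First I would record that, upon setting $\tilde{k}_v = r k_v$, $\tilde{k}_p = r k_p$, and $\tilde{h}_w = \frac{1+r}{2} h_w$, this denominator becomes $\mathcal{D}_{+}(s) = s^2 e^{\tau s} + (\tilde{k}_v + \tilde{h}_w \tilde{k}_p) s + \tilde{k}_p$, which is \emph{exactly} the CACC denominator $s^2 e^{\tau s} + \gamma s + k_p$ of~\eqref{eq:inter-vehicular-spacing-error-propagation-equation-cacc-case} with $\gamma$ replaced by $\tilde{\gamma} := \tilde{k}_v + \tilde{h}_w \tilde{k}_p$ and $k_p$ replaced by $\tilde{k}_p$.

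Next I would invoke the fact, established inside the proofs of Theorem~\ref{theorem:string-stability-condition-ka-range-cacc-plus-case} and Theorem~\ref{theorem:string-stability-condition-cacc-plus-case}, that the robust string stability conditions for CACC+ are precisely the CACC conditions of Theorem~\ref{theorem:string-stability-condition-ka-range} and Theorem~\ref{theorem:string-stability-condition} written in the tilde variables, namely $\tilde{k}_a = r k_a < 1$, $\tilde{h}_w > \frac{2\tau_0}{1 + \tilde{k}_a}$, together with $(k_v,k_p) \in \tilde{\mathcal{S}}(k_v,k_p)$ from~\eqref{eq:kv-kp-admissible-region-cacc+}. Since $r \ge 1$ and $k_v, k_p > 0$ force $\tilde{k}_v, \tilde{k}_p, \tilde{h}_w > 0$, the transformed parameters $(\tilde{\gamma}, \tilde{k}_p)$ satisfy exactly the hypotheses under which Theorem~\ref{theorem:the-inclusion-of-internal-stability-in-string-stability-cacc-case} (proved in Appendix~\ref{Appendix_A} via the Pontryagin interlacing criterion of Lemma~\ref{lemma:internal-stability} and the root count of Lemma~\ref{lemma:real-roots}) guarantees that $s^2 e^{\tau s} + \tilde{\gamma} s + \tilde{k}_p$ is Hurwitz for all $\tau \in (0,\tau_0]$. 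Applying that conclusion to $\mathcal{D}_{+}(s)$ yields the desired internal stability.

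The hard part will be verifying that this reduction is faithful rather than merely formal: one must confirm that the conditions checked in Appendix~\ref{Appendix_A} -- the degree requirements of Lemma~\ref{lemma:internal-stability}, the simple-real-root interlacing of $D_r(\omega)$ and $D_i(\omega)$, and the sign of $D_i'(\omega_0) D_r(\omega_0) - D_i(\omega_0) D_r'(\omega_0)$ -- depend on the gains only through the positive quantities $\tilde{\gamma}$ and $\tilde{k}_p$ and the delay $\tau$, so that each carries over verbatim under the substitution. Because the map $(k_v, k_p, h_w, k_a) \mapsto (\tilde{k}_v, \tilde{k}_p, \tilde{h}_w, \tilde{k}_a)$ is sign-preserving and the time-headway bound~\eqref{eq:time-headway-cacc+} is equivalent to~\eqref{eq:range-of-tilde-hw}, no genuinely new quasi-polynomial analysis is required; the remaining work is simply to record that the interlacing and positivity certificates produced in the CACC proof transfer intact to $\mathcal{D}_{+}(s)$.
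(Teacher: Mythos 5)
Your proposal is correct and follows essentially the same route as the paper's own proof: both reduce the CACC+ case to the CACC case by rewriting the common denominator $s^2 e^{\tau s} + \bigl(\tilde{k}_v + \tilde{k}_p \tilde{h}_w\bigr)s + \tilde{k}_p$ in the tilde variables $\tilde{k}_a = r k_a$, $\tilde{k}_v = r k_v$, $\tilde{k}_p = r k_p$, $\tilde{h}_w = \tfrac{1+r}{2}h_w$, observing that the CACC+ string stability conditions of Theorems~\ref{theorem:string-stability-condition-ka-range-cacc-plus-case} and~\ref{theorem:string-stability-condition-cacc-plus-case} are exactly the CACC conditions of Theorems~\ref{theorem:string-stability-condition-ka-range} and~\ref{theorem:string-stability-condition} in those variables, and then invoking the Appendix~\ref{Appendix_A} Pontryagin-interlacing analysis, which depends on the gains only through $\tau\gamma$ and $\tau^2 k_p$ together with the string-stability bounds. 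Your additional care in flagging that the reduction must be checked to be faithful (sign-preservation and equivalence of the headway bounds) is precisely the point the paper treats implicitly with its phrase ``following the same procedure.''
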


\begin{proof}
  See~{\bf Appendix~\ref{Appendix_B}}.   
\end{proof}
     \section{Simulation Results and Discussion} \label{section:numerical-simulations}
 In this section, we present a numerical example to evaluate the theoretical results in Section~\ref{section:main-results}. We consider the following numerical values for the system parameters: $N = 10$, $\tau_0 = 0.5$ s, $d = 5$ m, respectively. In the numerical simulation, $\tau$ was chosen as $\tau = \tau_0$. The initial steady-state velocity for the platoon is assumed to be $25$ m/s. The perturbation on the acceleration of the lead vehicle is assumed to be
\begin{align}
 a_0(t) = 
 \begin{cases}
 0.5 \sin ( 0.1 \pi (t - 10) ), t \in (10, 30)~{\rm s}, \\
  0, \mbox{otherwise}.
  \end{cases}
\end{align}
Under the above chosen values of system parameters and acceleration perturbation model of the lead vehicle, we will evaluate the performance of the platoon under CACC and CACC+ in the following two subsections, respectively.
\subsection{CACC case}
First, following from~{\rm \textbf{Theorem~\ref{theorem:string-stability-condition-ka-range}}}, choose $k_a = 0.5$, then according to~{\rm \textbf{Theorem~\ref{theorem:string-stability-condition}}}, the lower bound of $h_w$ can be computed as 
\begin{align} \label{eq:lower-bound-of-hw-in-simulation}
 h_w > \frac{2 \tau_0}{1 + k_a} = 0.6667~{\rm s}.
\end{align} 
Then, choosing $h_w = 0.7$ s, we have $a_1 = 0.7500$, $b_1 = 1.0714$, $a_2 = 0.7143$, $b_2 = 2.0408$, and accordingly the feasible region of $k_v$ and $k_p$ can be obtained as shown in Figure~\ref{fig:kvkp-CACC-case}.
 \begin{figure}[!htb]
\centering{\includegraphics[scale=0.6]{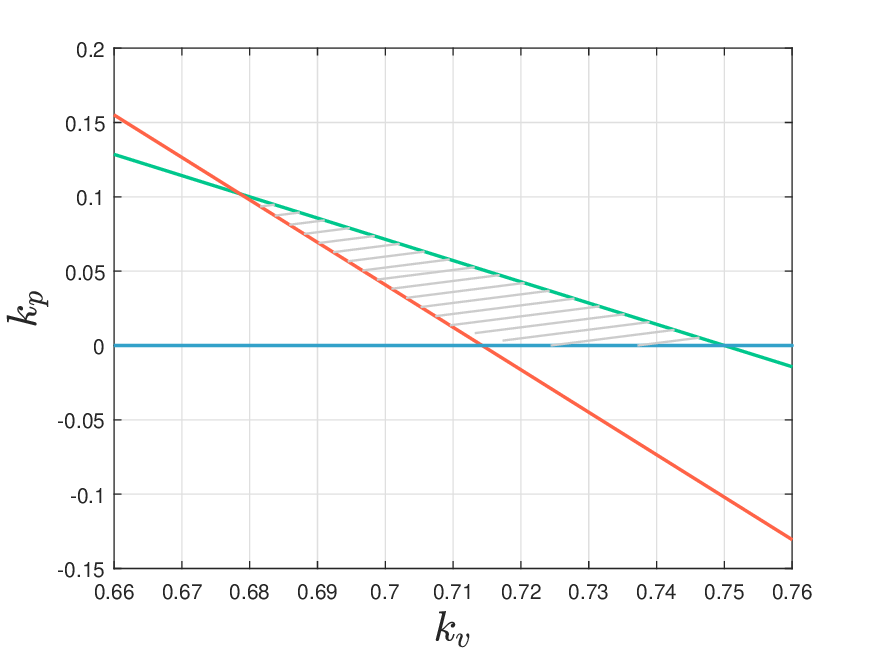}}
\caption{The admissible region of $k_v$ and $k_p$ (CACC case).
\label{fig:kvkp-CACC-case}}
\end{figure}
Choose $k_v = 0.7$, $k_p = 0.06$, 
then the relation between $\vert H(j \omega; \tau) \vert$ and $\tau$ and $\omega$ is shown in Figure~\ref{fig:H_norm}. It can be seen that the robust string stability condition $\vert H(j \omega; \tau) \vert \le 1$, $\forall \tau \in (0, \tau_0]$ and $\forall \omega \ge 0$ is satisfied under the chosen time headway and control gains.
\begin{figure}[!htb]
\centering{\includegraphics[scale=0.6]{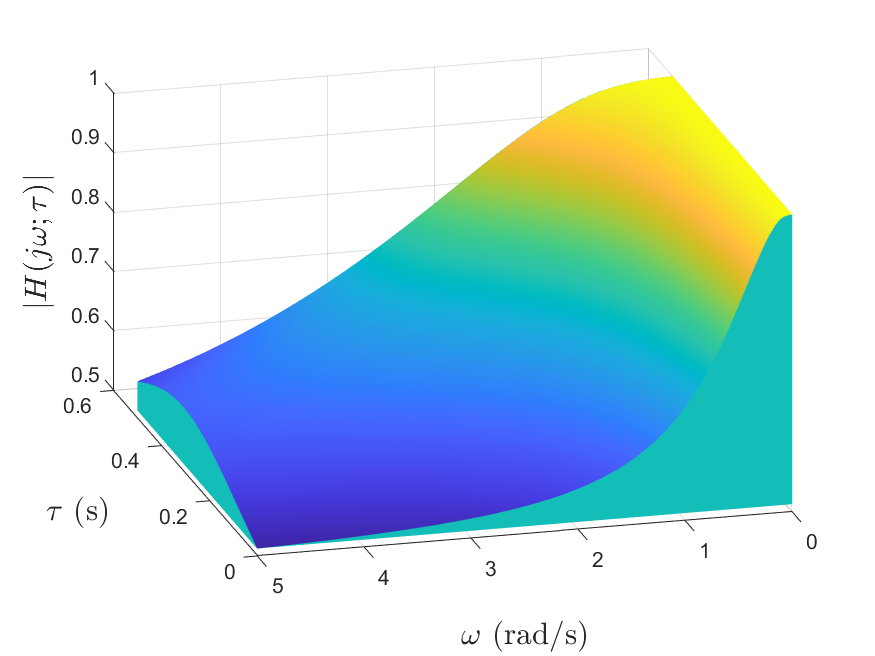}}
\caption{The profile of $\vert H(j \omega; \tau) \vert$ (CACC case).
\label{fig:H_norm}}
\end{figure}

To verify the internal stability of the closed-loop system, first, the responses and the roots of the real and imaginary parts of $\tau_0^2 \mathcal{D}(j \theta)$ are presented in Figure~\ref{fig:internal_stability}, from which it can be seen that (i) the equations $\mathcal{D}_i(\theta) = 0$ and $\mathcal{D}_r(\theta) = 0$ have only real roots; (ii) these roots interlace. 
\begin{figure}[!htb]
\centering{\includegraphics[scale=0.6]{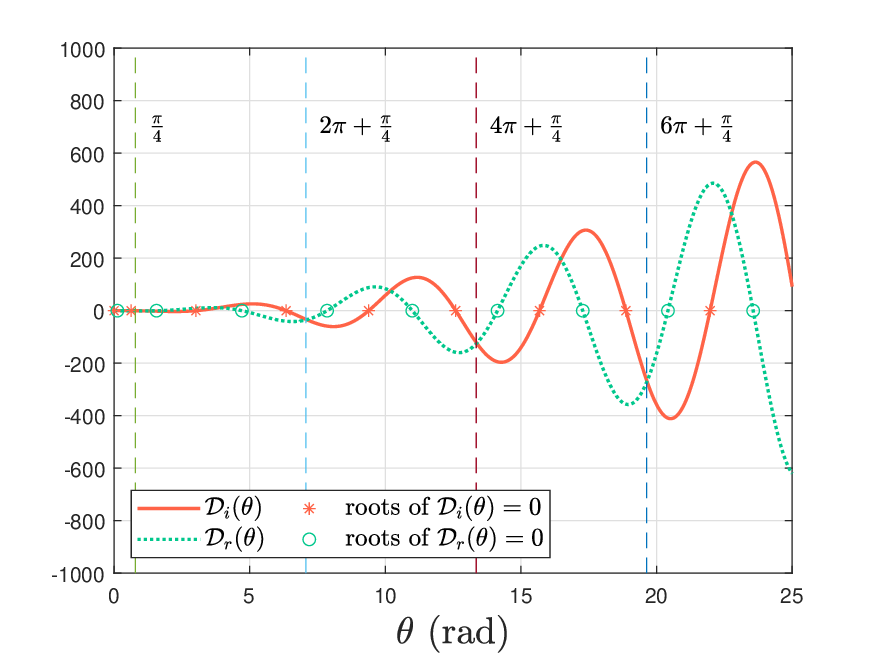}}
\caption{The responses and roots of the real and imaginary parts of $\tau_0^2 \mathcal{D}(j \theta)$ where $\theta = \tau_0 \omega$.
\label{fig:internal_stability}}
\end{figure}
In addition, substituting the values of the time headway and the control gains as well as $\tau = \tau_0$ into~\eqref{eq:internal-stability-condition-2}, the plot of $\omega$ vs.  $\mathcal{D}_i^{\prime}(\omega) \mathcal{D}_r(\omega) - \mathcal{D}_i(\omega) \mathcal{D}_r^{\prime}(\omega)$ is shown in Figure~\ref{fig:internal_stability_condition_2}. 
\begin{figure}[!htb]
\centering{\includegraphics[scale=0.6]{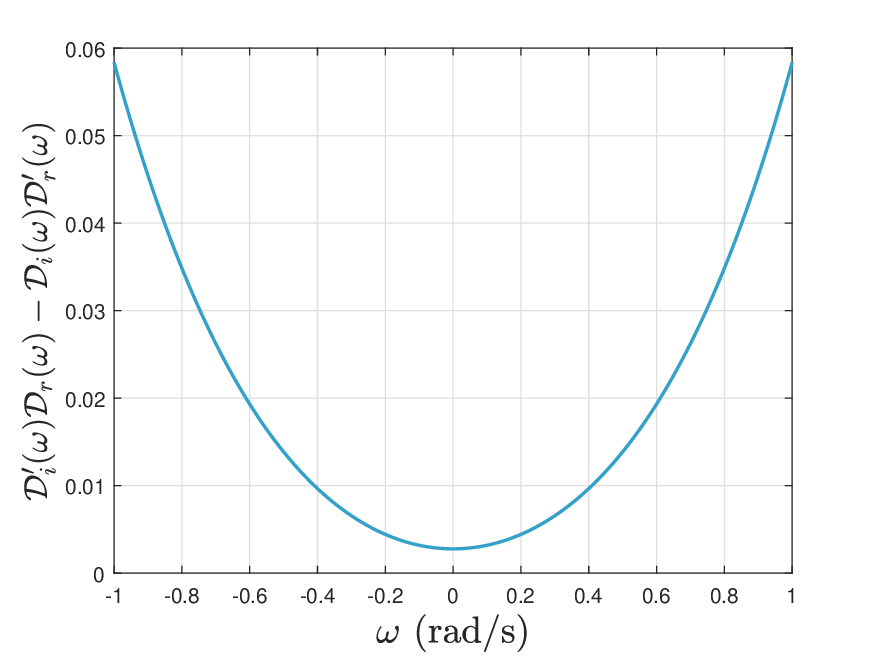}}
\caption{The curve of $\mathcal{D}_i^{\prime}(\omega) \mathcal{D}_r(\omega) - \mathcal{D}_i(\omega) \mathcal{D}_r^{\prime}(\omega)$.
\label{fig:internal_stability_condition_2}}
\end{figure}
Note that $\mathcal{D}_i^{\prime}(0) \mathcal{D}_r(0) - \mathcal{D}_i(0) \mathcal{D}_r^{\prime}(0) > 0$ and $\mathcal{D}_i^{\prime}(\omega) \mathcal{D}_r(\omega) - \mathcal{D}_i(\omega) \mathcal{D}_r^{\prime}(\omega)$ is positive for all $\omega$ under the chosen time headway and control gains. Therefore, the closed-loop system is internally stable. 

Under the above chosen values of the time headway $h_w$ and the control gains $k_a$, $k_v$, $k_p$, the response of the inter-vehicular spacing errors is provided in Figure~\ref{fig:delta_CACC}.
\begin{figure}[!htb]
\centering{\includegraphics[scale=0.6]{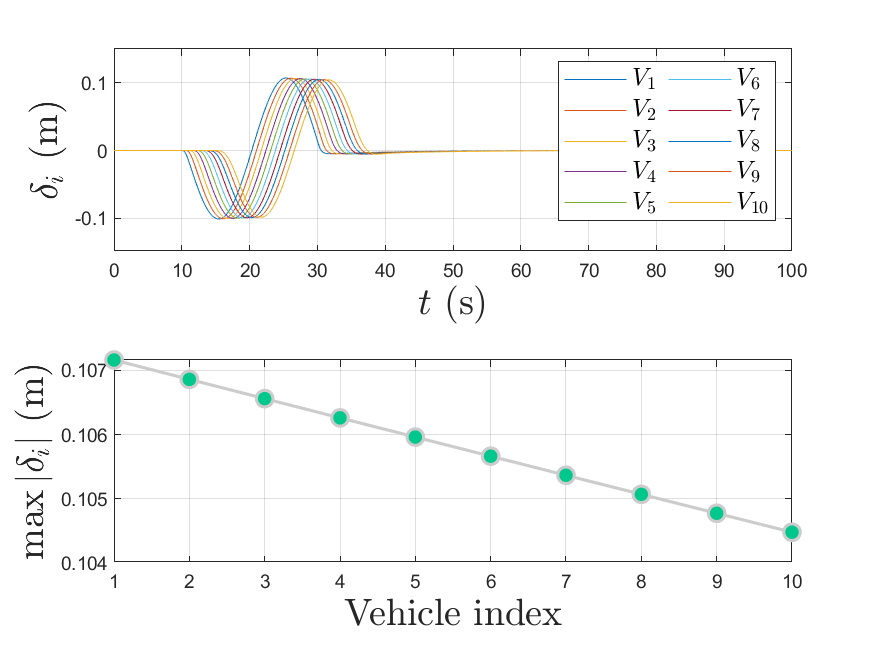}}
\caption{The inter-vehicular spacing errors (CACC case, $h_w = 0.7$ s).
\label{fig:delta_CACC}}
\end{figure}
In contrast, when $h_w = 0.6$ s, the evolution of the inter-vehicular spacing errors is given in Figure~\ref{fig:delta_CACC_hw_0point6}, from which it can be seen that the platoon is string unstable when $h_w = 0.6$ s, that is smaller than the lower bound of the time headway in~\eqref{eq:lower-bound-of-hw-in-simulation}.  
\begin{figure}[!htb]
\centering{\includegraphics[scale=0.6]{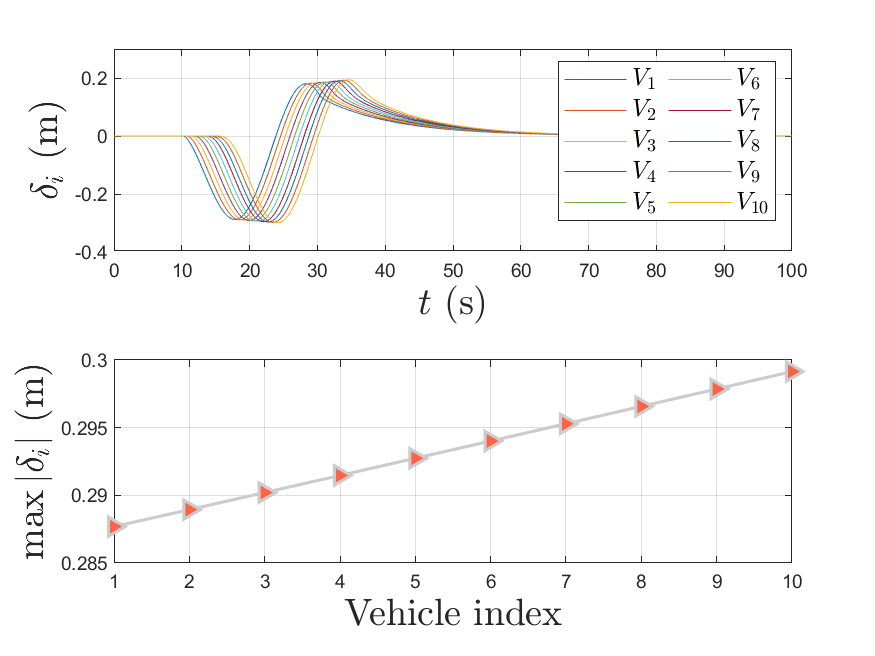}}
\caption{The inter-vehicular spacing errors (CACC case, $h_w = 0.6$ s).
\label{fig:delta_CACC_hw_0point6}}
\end{figure}

 When $k_a = 0$ in the CACC case, we obtain the ACC case, and the lower bound of $h_w$ for ensuring robust string stability becomes $h_w > 2 \tau_0 = 1$ s, then choosing $h_w = 1.2$ s, we can compute
$a_1 = 1, b_1 = 0.8333, a_2 = 0.8333, b_2 = 1.3889$, and accordingly the admissible region of $k_v$ and $k_p$ is shown in Figure~\ref{fig:kvkp-ACC-case}.
\begin{figure}[!htb]
\centering{\includegraphics[scale=0.6]{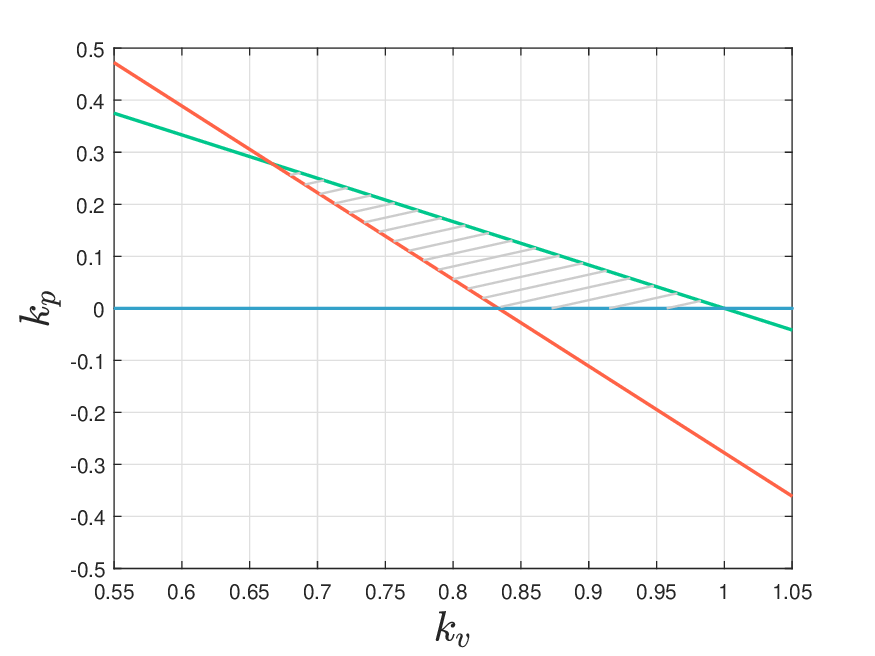}}
\caption{The admissible region of $k_v$ and $k_p$ (ACC case).
\label{fig:kvkp-ACC-case}}
\end{figure}
Choosing $k_v = 0.8$, $k_p = 0.1$ from the admissible region, the relation between $\vert H(j \omega; \tau) \vert$ and $\tau$ and $\omega$ is shown in Figure~\ref{fig:H_1_norm}. 
\begin{figure}[!htb]
\centering{\includegraphics[scale=0.6]{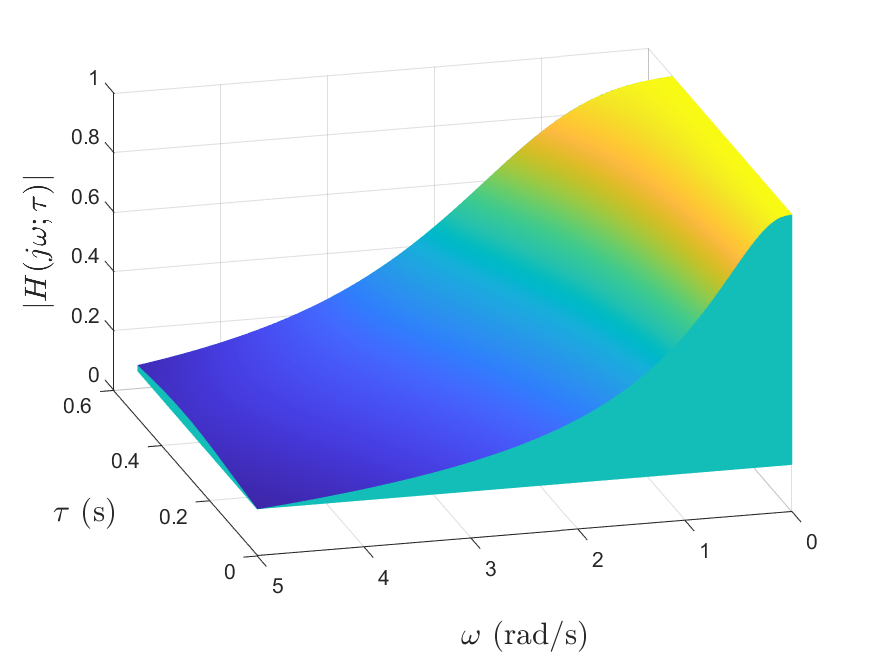}}
\caption{The profile of $\vert H(j \omega; \tau) \vert$ (ACC case).
\label{fig:H_1_norm}}
\end{figure}
Under the above chosen values of the time headway $h_w$ and the control gains $k_v$, $k_p$, the evolution of the inter-vehicular spacing errors is given in Figure~\ref{fig:delta_ACC}.  
\begin{figure}[!htb]
\centering{\includegraphics[scale=0.6]{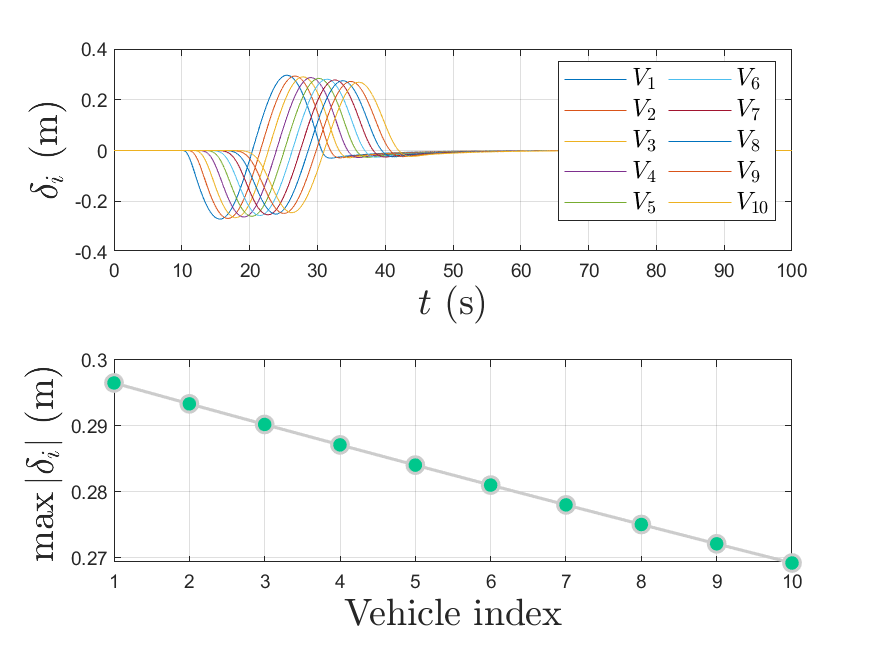}}
\caption{The inter-vehicular spacing errors (ACC case, $h_w = 1.2$ s).
\label{fig:delta_ACC}}
\end{figure}
For comparison, when $h_w$ is changed from 1.2 s to 0.9 s, the response of the inter-vehicular spacing errors for ACC is shown in Figure~\ref{fig:delta_ACC_hw_0point9}, which exhibits string instability.  
\begin{figure}[!htb]
\centering{\includegraphics[scale=0.6]{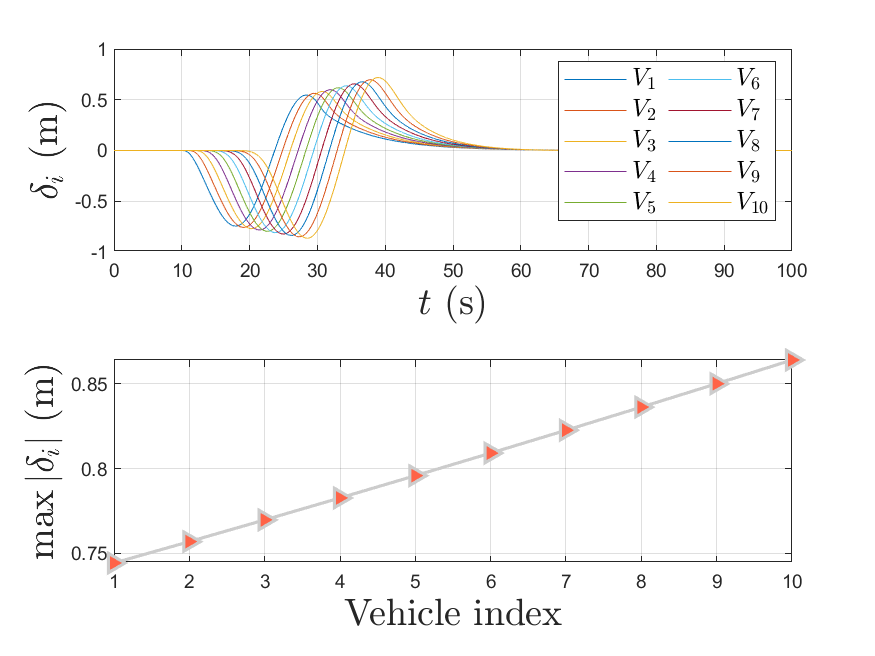}}
\caption{The inter-vehicular spacing errors (ACC case, $h_w = 0.9$ s).
\label{fig:delta_ACC_hw_0point9}}
\end{figure}

\subsection{CACC+ case}
Assume $r = 3$. First, according to~{\rm \textbf{Theorem~\ref{theorem:string-stability-condition-ka-range-cacc-plus-case}}}, we obtain $k_a < \frac{1}{3}$, from which we choose $k_a = 0.2$. In addition, based on~{\rm \textbf{Theorem~\ref{theorem:string-stability-condition-cacc-plus-case}}}, we have
\begin{align}
 h_w > \frac{ 4 \tau_0 }{ ( 1 + r ) ( 1 + r k_a ) } = 0.3125~{\rm s}.
\end{align} 
Choosing $h_w = 0.32$, we can compute 
 \begin{align}
 \tilde{a}_1 = 0.64, \tilde{b}_1 = 1, \tilde{a}_2 = 0.6250, \tilde{b}_2 = 1.9531, 
 \end{align} 
 accordingly, the admissible region of $k_v$ and $k_p$ is shown in Figure~\ref{fig:kvkp-CACC-plus-case}. Choose $k_v = 0.206$, $k_p = 0.01$, 
 \begin{figure}[!htb]
\centering{\includegraphics[scale=0.6]{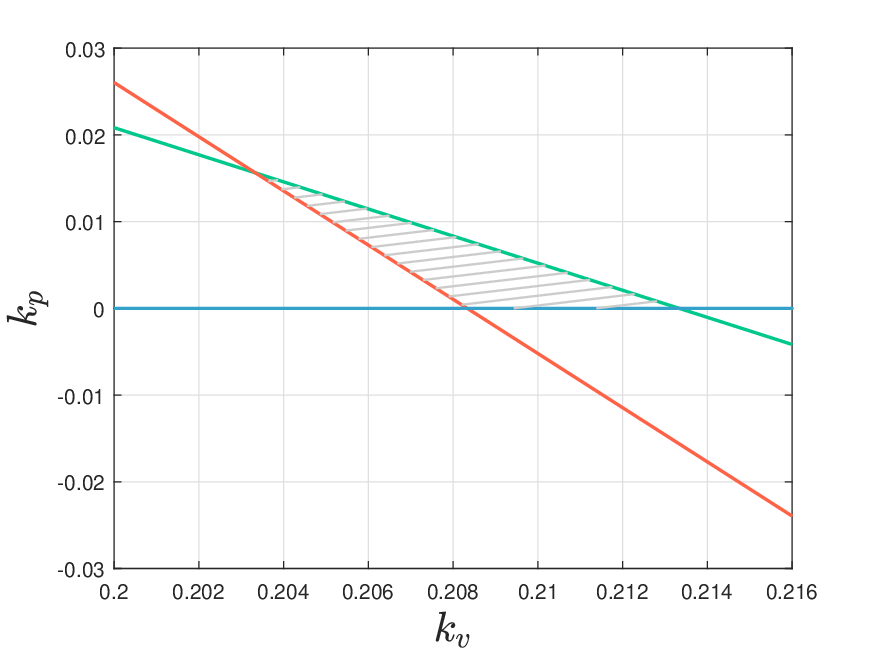}}
\caption{The admissible region of $k_v$ and $k_p$ (CACC+ case).
\label{fig:kvkp-CACC-plus-case}}
\end{figure}
then the norm of $r H(j \omega; \tau)$ is shown in Figure~\ref{fig:H_norm_cacc_plus}.
\begin{figure}[!htb]
\centering{\includegraphics[scale=0.6]{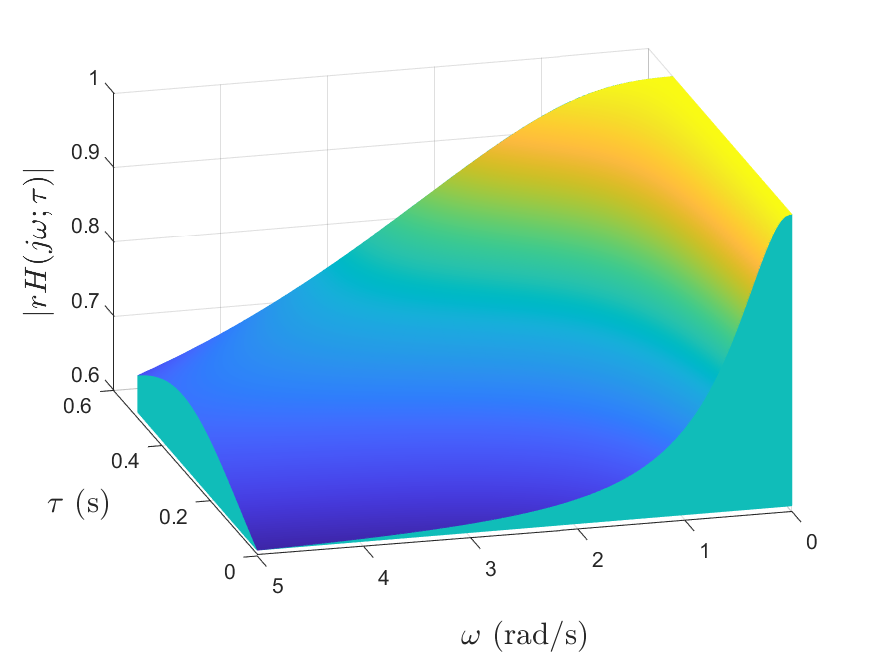}}
\caption{The profile of $\vert r H(j \omega; \tau) \vert$ (CACC+ case).
\label{fig:H_norm_cacc_plus}}
\end{figure}

Note that for the first two following vehicles, i.e. $V_1$ and $V_2$ as illustrated in Figure~\ref{fig_illustration_of_vehicle_platoon}, there are less than 3 predecessor vehicles, thus we choose $r = 2$ for $V_2$ and $r = 1$ for $V_1$. Through the same procedure for $r = 3$, we can obtain a set of feasible time headway and control gains for $V_2$ as $h_w = 0.5$, $k_a = 0.2$, $k_v = 0.4$, $k_p = 0.02$; the time headway and control gains for $V_1$ are chosen the same as those adopted in the CACC case. Under the above chosen values of the time headway and the control gains, the response of the inter-vehicular spacing errors is given by Figure~\ref{fig:delta_CACC_plus}, from which it can be observed that the inter-vehicular spacing errors are not amplified from the entire platoon point of view.          
\begin{figure}[!htb]
\centering{\includegraphics[scale=0.6]{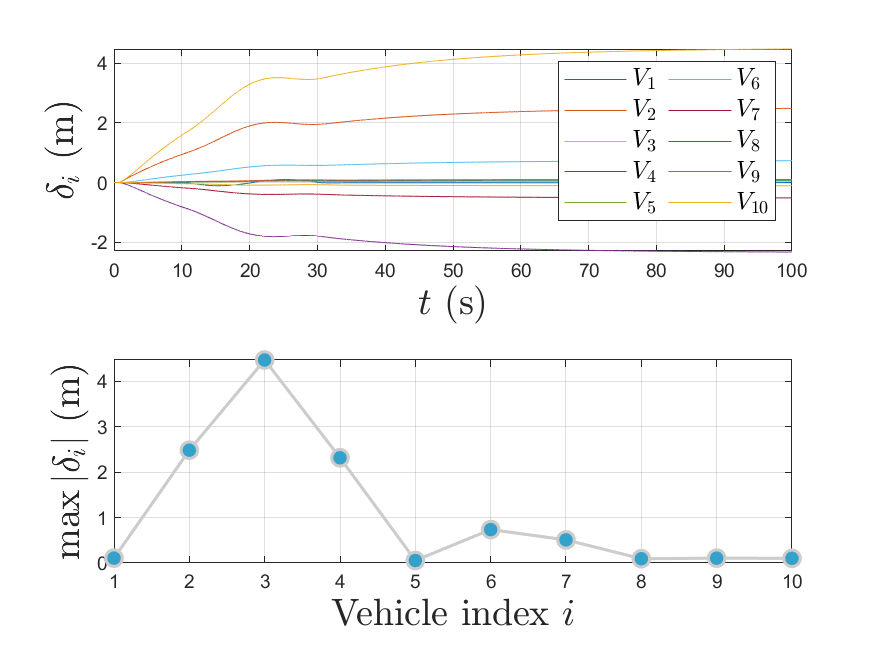}}
\caption{The inter-vehicular spacing errors (CACC+ case).
\label{fig:delta_CACC_plus}}
\end{figure}

From the simulation results, it can be seen that the vehicle platoon can achieve both string stability and internal stability for ACC, CACC and CACC+ cases with the time headway and control gains synthesized through the design procedure developed in Section~\ref{section:main-results}, which corroborates the main results.
     \section{Conclusion} \label{section:conclusion}
 We have proved that a connected and autonomous vehicle platoon under adaptive cruise control and cooperative adaptive cruise control systems is robustly string stable with general parasitic actuation delay, that is, given an upper bound on the parasitic actuation delay ($\tau_0$), one can select time headways satisfying $h_w > 2 \tau_0$ for ACC, $h_w > \frac{2 \tau_0}{1 + k_a}$ for CACC and $h_w > \frac{4 \tau_0}{(1 + r)(1 + r k_a)}$ for CACC+ with information from $r$ predecessors. This work answers the long-standing question in the affirmative as to whether one can select the time headway for the platoon and ensure robust string stability in the presence of general parasitic actuation delay. We have shown that such a selection of the time headway also ensures internal stability of the platoon. Numerical simulations on an illustrative example corroborate this result. Future work will consider the effect of communicated information packet drops on the lower bound of the time headway under general parasitic actuation delay.


 \appendix
 

\section{} \label{Appendix_A}
 \begin{proof}[Proof of Theorem~\ref{theorem:the-inclusion-of-internal-stability-in-string-stability-cacc-case}]
From~\eqref{eq:inter-vehicular-spacing-error-propagation-equation-cacc-case}, we have 
\begin{align} \label{eq:D-ast-s}
\tau^2 \mathcal{D}(s) = \tau^2 (s^2 e^{\tau s} + \gamma s + k_p).
\end{align}
Then, substituting $s = j \omega$, we have
 \begin{align} \label{eq:D-ast-j-omega}
 \tau^2 \mathcal{D}(j \omega) = \mathcal{D}_r(\omega) + j \mathcal{D}_i(\omega),
 \end{align}
 where $\mathcal{D}_r(\omega) = \tau^2 k_p - \tau^2 \omega^2 \cos(\tau \omega)$, $\mathcal{D}_i(\omega) = \tau^2 \gamma \omega - \tau^2 \omega^2 \sin(\tau \omega)$. Let $\theta = \tau \omega, \bar{\gamma} = \tau \gamma, \bar{k}_p = \tau^2 k_p$, then $\mathcal{D}_r(\omega)$ and $\mathcal{D}_i(\omega)$ can be respectively rewritten as follows:
\[
\mathcal{D}_r(\theta) = \bar{k}_p - \theta^2 \cos\theta; \ \  \mathcal{D}_i(\theta) = \bar{\gamma}\theta - \theta^2 \sin\theta.
\]
Since $\mathcal{D}_r(0) \neq 0$, roots of $\mathcal{D}_r(\theta) = 0$ are given by
\begin{align} \label{eq:cosine-real-part-equation}
\widetilde{\mathcal{D}}_r(\theta) \coloneqq \frac{\bar{k}_p}{\theta^2} - \cos\theta = 0.
\end{align}
The roots of $\mathcal{D}_i(\theta) = 0$ are $\theta=0$ and the roots of 
\begin{align} \label{eq:sine-imaginary-part-equation}
 \widetilde{\mathcal{D}}_i(\theta) \coloneqq \frac{\bar{\gamma}}{\theta} - \sin\theta = 0.
\end{align}
Denote the non-negative roots of $\mathcal{D}_i(\theta) = 0$ as $0 = \theta_{i, 1} < \theta_{i, 2} < \cdots$, and the positive roots of $\mathcal{D}_r(\theta) = 0$ as $\theta_{r, 1} < \theta_{r, 2} < \cdots$. To prove internal stability, according to~{\rm \textbf{Lemma~\ref{lemma:internal-stability}}}, we need to prove the following:
\begin{enumerate}[(A)]
    \item there are only simple and real roots of $\mathcal{D}_i(\theta)=0$ and $\mathcal{D}_r(\theta)=0$, and these roots interlace, i.e., $\theta_{i, 1} < \theta_{r, 1} < \theta_{i, 2} < \theta_{r, 2} < \cdots$; 
    \item $\mathcal{D}_i^{\prime}(\omega_0) \mathcal{D}_r(\omega_0) - \mathcal{D}_i(\omega_0) \mathcal{D}_r^{\prime}(\omega_0) > 0$ for some $\omega_0 \in (-\infty, \infty)$. \\
\end{enumerate}

\noindent \textbf{Proof of Statement (A)} 

According to {\rm \textbf{Lemma~\ref{lemma:real-roots}}}, we will employ the following procedure to show statement (A). First, we will separately show the roots of $\mathcal{D}_r(\theta) = 0$ or Equation \eqref{eq:cosine-real-part-equation} and those of $\mathcal{D}_i(\theta) = 0$ or Equation \eqref{eq:sine-imaginary-part-equation} lie in distinct intervals and they are simple and real. Then, we will show the interlacing property. \\

\noindent \textbf{Roots of $\mathcal{D}_r(\theta) = 0$ or Equation \eqref{eq:cosine-real-part-equation}} \\

\textbf{Step 1:} We will first show that $\bar k_p < \frac{4}{27} 
$.

Combining~\eqref{eq:quadratic-inequality-condition-1-cacc} and~\eqref{eq:quadratic-inequality-condition-2-cacc}, we have  
\begin{align}
 \sqrt{ 2 k_p (1 - k_a) + k_v^2 } \le \frac{1 - k_a^2}{2 \tau_0},
\end{align}
 from which we obtain 
\begin{align}
 k_p \le \frac{ \left( \frac{1 - k_a^2}{2 \tau_0} \right)^2 - k_v^2 }{ 2 (1 - k_a) }.
\end{align}
Then, it follows that
\begin{align} \label{eq:tau-square-kp-inequality-1}
 \bar{k}_p = \tau^2 k_p \le \tau_0^2 k_p < \frac{ \tau_0^2 \left( \frac{ 1 - k_a^2 }{2 \tau_0} \right)^2 }{2 (1 - k_a)} = \frac{ (1 - k_a) (1 + k_a)^2 }{8}.
\end{align}
Define
\begin{align}
 f_1(k_a) &\coloneqq (1 - k_a) (1 + k_a)^2 \nonumber \\
&= - k_a^3 - k_a^2 + k_a + 1.  
\end{align}
Then, 
 \begin{align}
 \frac{d f_1(k_a)}{d k_a} 
 = (1 - 3 k_a) ( 1 + k_a ).
 \end{align}
Thus, when $k_a \in [0, 1)$, we have
\begin{align}
 \max\limits_{k_a \in [0, 1)} f_1(k_a) = f_1\left(\frac{1}{3}\right) = \frac{32}{27}.  
\end{align}
Substituting this into~\eqref{eq:tau-square-kp-inequality-1} yields
\begin{align} \label{eq:bar-kp-upper-bound}
 \bar{k}_p < \frac{\max\limits_{k_a \in [0, 1)} f_1(k_a)}{8} = \frac{4}{27}.
\end{align}

\textbf{Step 2:} We will show that there is exactly one root of $\mathcal{D}_r(\theta)$ in the interval $\left(0, \frac{\pi}{4} \right)$. 

It suffices to show that $\mathcal{D}_r(\theta)$ is monotonic in the interval $\left(0, \frac{\pi}{4} \right)$ with $\mathcal{D}_r(0)\mathcal{D}_r\left(\frac{\pi}{4} \right)<0$.

Notice that \begin{eqnarray*}-\frac{d}{d\theta}\mathcal{D}_r(\theta) &=& 2 \theta \cos \theta - \theta ^2 \sin \theta 
= \theta (2 \cos \theta - \theta \sin \theta)\\ &>& \theta \left(\sqrt{2} - \frac{\sqrt{2}\pi}{8} \right)>0.
\end{eqnarray*}
Hence, $\mathcal{D}_r(\theta)$ is monotonically decreasing in the interval $\left(0, \frac{\pi}{4} \right)$; moreover 
\begin{eqnarray*}
   \mathcal{D}_r(0) &=& {\bar k}_p >0, \\ \mathcal{D}_r\left(\frac{\pi}{4} \right) &=& {\bar k}_p - \left(\frac{\pi}{4} \right)^2 \frac{\sqrt{2}}{2} <\frac{4}{27}- \frac{\sqrt{2}\pi^2}{32} <0. 
\end{eqnarray*}
Therefore, there exists only one real root $\theta_{r,1} \in \left(0,\frac{\pi}{4} \right)$. \\


\textbf{Step 3:} Similarly, we can show that there exists only one real root in the interval $\left(\frac{\pi}{4}, \frac{\pi}{2} \right)$, no real root in the interval $\left(\frac{\pi}{2}, \frac{3\pi}{2} \right)$, and only one real root in the interval $\left(\frac{3 \pi}{2}, \frac{3 \pi}{2} + \frac{\pi}{4} \right)$ for $\mathcal{D}_r(\theta) = 0$. To this end, we will utilize equation~\eqref{eq:cosine-real-part-equation} since it has the same real roots as $\mathcal{D}_r(\theta) = 0$ for $\theta \neq 0$. 

When $\theta \in \left(\frac{\pi}{4}, \frac{\pi}{2} \right)$: As
 \begin{align}
 \frac{2 \bar{k}_p}{\theta^3} < \frac{ 2 \bar{k}_p }{ \left(\frac{\pi}{4} \right)^3 } < \frac{ \frac{8}{27} }{ \left( \frac{\pi}{4} \right)^3 } < \frac{ \sqrt{2} }{2} < \sin\theta,    
 \end{align}
 then 
 \begin{align}
  \frac{d}{d \theta} \widetilde{\mathcal{D}}_r(\theta) = -\frac{ 2 \bar{k}_p }{ \theta^3 } + \sin \theta > 0.   
 \end{align}
 Thus, $\widetilde{\mathcal{D}}_r(\theta)$ increases monotonically in this interval; in addition, 
it follows from~\eqref{eq:bar-kp-upper-bound} that  
 \begin{align}
 \widetilde{\mathcal{D}}_r\left(\frac{\pi}{4} \right) &= \frac{\bar{k}_p}{\left(\frac{\pi}{4} \right)^2} - \cos\left(\frac{\pi}{4} \right) < 0, \\
 \widetilde{\mathcal{D}}_r\left(\frac{\pi}{2} \right) &= \frac{\bar{k}_p}{\left(\frac{\pi}{2} \right)^2} - \cos\left(\frac{\pi}{2} \right) > 0.
 \end{align}
Therefore, there exists only one real root $\theta_{r, 2} \in \left( \frac{\pi}{4}, \frac{\pi}{2} \right)$. 

When $\theta \in \left( \frac{\pi}{2}, \frac{3 \pi}{2} \right)$: As $\cos\theta < 0 < \frac{\bar{k}_p}{\theta^2}$, then $\widetilde{\mathcal{D}}_r(\theta) > 0$ for all $\theta \in (\frac{\pi}{4}, \frac{\pi}{2})$, i.e. 
 there is no real root in this interval. 
 
When $\theta \in \left( \frac{3 \pi}{2}, \frac{3 \pi}{2} + \frac{\pi}{4} \right)$: As
 %
 \begin{align}
 - \frac{2 \bar{k}_p}{\theta^3} < 0, \quad \sin \theta < 0,    
 \end{align}
 then 
 \begin{align}
 \frac{d}{d \theta} \widetilde{\mathcal{D}}_r(\theta) = - \frac{2 \bar{k}_p}{\theta^3} + \sin \theta < 0,    
 \end{align}
 which indicates $\widetilde{\mathcal{D}}_r(\theta)$ is monotonically decreasing in this interval; also, the following are satisfied for $\theta \in \left(\frac{3 \pi}{2}, \frac{3 \pi}{2} + \frac{\pi}{4} \right)$:  
 %
 \begin{align}
 \widetilde{\mathcal{D}}_r\left(\frac{3 \pi}{2} \right) = \frac{ \bar{k}_p }{ \left( \frac{ 3 \pi}{2} \right)^2 } - \cos\left( \frac{3 \pi}{2} \right) > 0, \\
\widetilde{\mathcal{D}}_r\left(\frac{3 \pi}{2} + \frac{\pi}{4} \right) < \frac{ \bar{k}_p }{ \left( \frac{\pi}{4} \right)^2 } - \cos\left( \frac{3 \pi}{2} + \frac{\pi}{4} \right) < 0. 
 \end{align}
Thus there exists only one real root $\theta_{r, 3} \in \left( \frac{3 \pi}{2}, \frac{3 \pi}{2} + \frac{\pi}{4} \right)$.

 Repeating the above procedure and using the periodicity of $\cos\theta$, we can obtain the following: 
 
 When $\theta \in \left( \frac{\pi}{4} + 2 m \pi, \frac{\pi}{2} + 2 m \pi \right)$, where $m = 0, 1, 2, \cdots$, we have 
 \begin{align}
 \frac{2 \bar{k}_p}{\theta^3} < \frac{2 \bar{k}_p}{\left(\frac{\pi}{4} \right)^2} < \frac{\sqrt{2}}{2} < \sin\theta,
 \end{align}
%
 i.e., 
 \begin{align}
 \frac{d}{d \theta} \widetilde{\mathcal{D}}_r(\theta) = - \frac{2 \bar{k}_p}{\theta^3} + \sin \theta > 0,    
 \end{align}
 as well as 
 \begin{align}
 \begin{cases}{}
 \cos( \frac{\pi}{4} + 2 m \pi) > \frac{ \bar{k}_p }{ \left(\frac{\pi}{4} + 2 m \pi \right)^2 }, \\
 \cos( \frac{\pi}{2} + 2 m \pi ) < \frac{ \bar{k}_p }{ \left(\frac{\pi}{2} + 2 m \pi \right)^2 }, 
 \end{cases}
 \end{align}
 i.e., 
 \begin{align}
 \begin{cases}
\widetilde{\mathcal{D}}_r\left( 
\frac{\pi}{4} + 2 m \pi \right) = \frac{\bar{k}_p}{ \left( \frac{\pi}{4} + 2 m \pi \right)^2 } - \cos\left( 
\frac{\pi}{4} + 2 m \pi \right) < 0, \\
 \widetilde{\mathcal{D}}_r\left( 
\frac{\pi}{2} + 2 m \pi \right) = \frac{ \bar{k}_p }{ \left( \frac{\pi}{2} + 2 m \pi \right)^2 } - \cos\left( \frac{\pi}{2} + 2 m \pi \right) > 0.
 \end{cases}
 \end{align}
Thus, there exists only one real root $\theta_{r, 2 m + 2} \in \left( \frac{\pi}{4} + 2 m \pi, \frac{\pi}{2} + 2 m \pi \right)$. 

 When $\theta \in \left( \frac{3 \pi}{2} + 2 n \pi, \frac{7 \pi}{4} + 2 n \pi \right)$, where $n = 0, 1, 2, \cdots$, we have
 \begin{align}
  \frac{\bar{k}_p}{\theta^3} > 0 > \sin\theta,
 \end{align}
 i.e., 
 \begin{align}
  \frac{d}{d \theta} \widetilde{\mathcal{D}}_r(\theta) = -\frac{2 \bar{k}_p}{\theta^3} + \sin \theta < 0,   
 \end{align}
 as well as 
 \begin{align}
 \begin{cases}{}
 \cos\left( \frac{3 \pi}{2} + 2 n \pi \right) < \frac{\bar{k}_p}{ \left(\frac{3 \pi}{2} + 2 n \pi \right)^2}, \\
 \cos\left( \frac{7 \pi}{4} + 2 n \pi \right) > \frac{\bar{k}_p}{ \left(\frac{7 \pi}{4} + 2 n \pi \right)^2},
 \end{cases}
 \end{align}
 i.e., 
 \begin{align}
  \begin{cases}
  \widetilde{\mathcal{D}}_r( \frac{3 \pi}{2} + 2 n \pi ) > 0, \\
   \widetilde{\mathcal{D}}_r( \frac{7 \pi}{4} + 2 n \pi ) < 0.
  \end{cases}    
 \end{align}
  Thus, there exists only one real root $\theta_{r, 2 n + 3} \in \left( \frac{3 \pi}{2} + 2 n \pi, \frac{7 \pi}{4} + 2 n \pi \right)$. 
 
Based on the above results, the positive real roots of~\eqref{eq:cosine-real-part-equation} satisfy the following: 
\begin{align} \label{eq:intervals-of-positive-roots-for-real-part-equation}
 0 &< \theta_{r, 1} < \frac{\pi}{4} < \theta_{r, 2} < \frac{\pi}{2} < \cdots \nonumber \\
 &< \left(k - 2 + \frac{3 - (-1)^k }{8} \right) \pi < \theta_{r,  k} \nonumber \\ 
 &< \left( k - 2 + \frac{5 - (-1)^k }{8} \right) \pi < \cdots,  
 \end{align}
 where $k \ge 2$. 
 
In addition, as a result of the symmetry of the roots of~\eqref{eq:cosine-real-part-equation}, the negative real roots satisfy the following: 
 \begin{align} \label{eq:intervals-of-negative-roots-for-real-part-equation}
 0 &> \theta_{r, -1} > - \frac{\pi}{4} > \theta_{r, -2} > - \frac{\pi}{2} > \cdots \nonumber \\
 &> - \left( k - 2 + \frac{ 3 - (-1)^k }{8} \right) \pi > \theta_{r, -k} \nonumber \\
&> - \left( k - 2 + \frac{ 5 - (-1)^k }{8} \right) \pi \nonumber \\
&> \cdots,    
 \end{align}
where $k \ge 2$ and $\theta_{r, -k}$ denotes the $k$-th largest negative root of~\eqref{eq:cosine-real-part-equation}.  \\

\textbf{Step 4:} On the basis of the roots given in~\eqref{eq:intervals-of-positive-roots-for-real-part-equation} and~\eqref{eq:intervals-of-negative-roots-for-real-part-equation}, we will determine the number of real roots of~\eqref{eq:cosine-real-part-equation} in each of the intervals $- 2 l \pi + \frac{\pi}{4} \le \theta \le 2 l \pi + \frac{\pi}{4}$ where $l = l_0, l_0 + 1, l_0 + 2, \cdots$, in which $l_0$ is a sufficiently large integer. This will allow us to invoke~{\rm \textbf{Lemma~\ref{lemma:real-roots}}}. 

First, when $l = 0$, there exist two roots: $\theta_{r, 1}$ and $\theta_{r, -1}$ in the interval $-\frac{\pi}{4} \le \theta \le \frac{\pi}{4}$. 
Second, substituting $k = 2 l + 1, l = 1, 2, \ldots$, into~\eqref{eq:intervals-of-positive-roots-for-real-part-equation}, we have  
 \begin{align}
 \left( k - 2 + \frac{ 5 - (-1)^k }{8} \right) \pi 
 &= 2 l \pi + \left( \frac{ 5 - (-1)^{2 l + 1} }{8} - 1 \right) \pi \nonumber \\
& < 2 l \pi + \frac{\pi}{4}. 
 \end{align} 
Substituting $ k = 2 l + 2$, we have 
 \begin{align}
\left( k - 2 + \frac{ 3 - (-1)^k }{8} \right) \pi  
&=2 l \pi + \left( \frac{ 3 - (-1)^{2 l + 2} }{8} \right) \pi \nonumber \\ 
&= 2 l \pi + \frac{\pi}{4}.  
 \end{align}
Thus, there exist exactly $(2 l + 1)$ positive roots in the interval $\left(0, 2 l \pi + \frac{\pi}{4} \right)$. 

On the other hand, when $\theta < 0$, substituting $k = 2 l + 1$, $l = 1, 2, \cdots$, into~\eqref{eq:intervals-of-negative-roots-for-real-part-equation}, we obtain 
 \begin{align} 
- \left( k - 2 + \frac{ 5 - (-1)^k }{8} \right) \pi &= -2 l \pi + \left( 1 - \frac{ 5 - (-1)^{2 l + 1} }{8} \right) \pi \nonumber \\ 
&= - 2 l \pi + \frac{\pi}{4},
 \end{align} 
which indicates that there exist exactly $(2 l + 1)$ negative roots in the interval $\theta \in \left(- 2 l \pi + \frac{\pi}{4}, 0 \right)$, $l = 1, 2, \ldots$. Thus, in the interval   
 \begin{align}
 - 2 l \pi + \frac{\pi}{4} \le \theta \le 2 l \pi + \frac{\pi}{4}, \ l = 1, 2, \ldots,
 \end{align}
 there exist exactly $(4 l + 2)$ real roots. Therefore, equation~\eqref{eq:cosine-real-part-equation} has only simple and real roots. \\   

\textbf{Roots of $\mathcal{D}_i(\theta) = 0$ or Equation~\eqref{eq:sine-imaginary-part-equation}}  \\

\textbf{Step 5:} We will first show that $\bar{\gamma} \le \frac{1}{2}$.




According to~\eqref{eq:quadratic-inequality-condition-1-cacc}, we have 
 \begin{align}
 \gamma \le \frac{1}{2 \tau_0},
 \end{align}
 based on which we obtain 
 \begin{align}
 \bar{\gamma} = \tau \gamma \le \tau_0 \gamma \le \frac{1}{2}. 
 \end{align}

 \textbf{Step 6:} We will show that there exists only one real root of~\eqref{eq:sine-imaginary-part-equation} in the interval $\left(0, \frac{\pi}{4} \right)$.

Since $\cos\theta > 0$ for all $\theta \in \left(0, \frac{\pi}{4} \right)$, we have
  \begin{align}
 \frac{d}{d\theta} \widetilde{\mathcal{D}}_i(\theta) = - \frac{\bar{\gamma}}{\theta^2} - \cos\theta < 0.
  \end{align}
 Thus, $\widetilde{\mathcal{D}}_i(\theta)$ is monotonically decreasing in this interval; in addition, for all $\theta \in \left(0, \frac{\pi}{4} \right)$, it also holds that 
 \begin{align}
 \widetilde{\mathcal{D}}_i(0) &= \lim\limits_{\theta \to 0} \frac{\bar{\gamma}}{\theta} - \sin(0) > 0, \\
  \widetilde{\mathcal{D}}_i\left(\frac{\pi}{4} \right) &= \frac{\bar{\gamma}}{ \left( \frac{\pi}{4} \right) } - \sin\left(\frac{\pi}{4} \right) \le \frac{2}{\pi} - \frac{\sqrt{2}}{2} < 0.
 \end{align}
  Thus, there exists only one real root $\theta_{i, 2} \in \left(0, \frac{\pi}{4} \right)$. \\

 \textbf{Step 7:} We will show that there exists no real root in the interval $\left( \frac{\pi}{4}, \frac{3 \pi}{4} \right)$ and only one real root in the interval $\left(\frac{3 \pi}{4}, \pi \right)$ of~\eqref{eq:sine-imaginary-part-equation}. 

 When $\theta \in \left( \frac{\pi}{4}, \frac{3 \pi}{4} \right)$: Since it 
  holds in this interval that 
 $\sin\theta > \sin\left(\frac{\pi}{4}\right) > \frac{\bar{\gamma}}{ \left( \frac{\pi}{4} \right) } > \frac{\bar{\gamma}}{\theta}$, then $\widetilde{\mathcal{D}}_i(\theta) < 0$ in this interval. Thus, there is no real root in the interval $\left(\frac{\pi}{4}, \frac{3 \pi}{4} \right)$. 
 
 When $\theta \in \left( \frac{3 \pi}{4}, \pi \right)$: In this interval, we have  
 \begin{align}
 - \frac{\bar{\gamma}}{\theta^2} > - \frac{ \bar{\gamma} }{ \left( \frac{3 \pi}{4} \right)^2 } > - \frac{ \frac{1}{2} }{ \left( \frac{3 \pi}{4} \right)^2 } > - \frac{\sqrt{2}}{2}, \\
 \cos\theta < \cos\left(\frac{3 \pi}{4} \right) = - \frac{\sqrt{2}}{2}. 
 \end{align} 
 Thus, 
 \begin{align}
 \frac{d}{d \theta} \widetilde{\mathcal{D}}_i(\theta) = - \frac{\bar{\gamma}}{\theta^2} - \cos \theta > 0,
 \end{align}
 i.e., $\widetilde{\mathcal{D}}_i(\theta)$ is monotonically increasing in this interval; it also holds that     
  \begin{align}
   \widetilde{\mathcal{D}}_i\left( \frac{3 \pi}{4} \right) &= \frac{\bar{\gamma}}{ \left(\frac{3 \pi}{4} \right) } - \sin\left(\frac{3 \pi}{4} \right) < 0, \\
    \widetilde{\mathcal{D}}_i(\pi) &= \frac{\bar{\gamma}}{\pi} - \sin(\pi) > 0.
  \end{align}
 Thus, there exists only one real root $\theta_{i, 3} \in \left(\frac{3 \pi}{4}, \pi \right)$. 
 
 Proceeding with the above procedure and using the periodicity of $\sin\theta$, we can conclude the following: 
 
 When $\theta \in \left(2 m \pi, \frac{\pi}{4} + 2 m \pi \right)$, $m = 0, 1, 2, \cdots$, it holds that 
 \begin{align}
 - \frac{\bar{\gamma}}{\theta^2} < 0 < \cos \theta, 
 \end{align}
 i.e., 
 \begin{align}
 \frac{d}{d \theta} \widetilde{\mathcal{D}}_i(\theta) = - \frac{\bar{\gamma}}{\theta^2} - \cos \theta < 0,    
 \end{align}
 then $\widetilde{\mathcal{D}}_i(\theta)$ is monotonically decreasing in this interval; in addition, 
 \begin{align}
 \begin{cases}{}
 \sin( 2 m \pi ) < \frac{\bar{\gamma}}{ 2 m \pi }, \\
 \sin \left( \frac{\pi}{4} + 2 m \pi \right) > \frac{ \bar{\gamma} }{ \frac{\pi}{4} + 2 m \pi },
 \end{cases}
 \end{align}
 i.e., 
 \begin{align}
  \begin{cases}
  \widetilde{\mathcal{D}}_i(2 m \pi) = \frac{\bar{\gamma}}{2 m \pi} - \sin(2 m \pi) > 0, \\
  \widetilde{\mathcal{D}}_i\left( \frac{\pi}{4} + 2 m \pi \right) = \frac{ \bar{\gamma} }{ \frac{\pi}{4} + 2 m \pi } - \sin\left( \frac{\pi}{4} + 2 m \pi \right) < 0.
  \end{cases}
 \end{align}
Thus, there exists only one real root $\theta_{i, 2 m + 2} \in \left( 2 m \pi, \frac{\pi}{4} + 2 m \pi \right)$. 
 
 When $\theta \in \left( \frac{3 \pi}{4} + 2 n \pi, \pi + 2 n \pi \right)$, $n = 0, 1, 2, \cdots$, it holds that  
 \begin{align}
  \cos \theta < - \frac{\sqrt{2}}{2} < - \frac{\bar{\gamma}}{\theta^2},
 \end{align}
 then 
 \begin{align}
 \frac{d}{d \theta} \widetilde{\mathcal{D}}_i(\theta) = - \frac{\bar{\gamma}}{\theta^2} - \cos \theta > 0,
 \end{align}
 that is, $\widetilde{\mathcal{D}}_i(\theta)$ is monotonically increasing in this interval; as well as 
 \begin{align} 
 \begin{cases}{}
 \sin\left( \frac{3 \pi}{4} + 2 n \pi \right) > \frac{ \bar{\gamma} }{ \frac{3 \pi}{4} + 2 n \pi }, \\
 \sin( \pi + 2 n \pi ) < \frac{ \bar{\gamma} }{ \pi + 2 n \pi },
 \end{cases}
 \end{align}
 i.e., 
  \begin{align}
   \begin{cases}
   \widetilde{\mathcal{D}}_i\left(\frac{3 \pi}{4} + 2 n \pi \right) = \frac{\bar{\gamma}}{ \frac{3 \pi}{4} + 2 n \pi } - \sin\left( \frac{3 \pi}{4} + 2 n \pi \right) < 0, \\
   \widetilde{\mathcal{D}}_i(\pi + 2 n \pi) = \frac{\bar{\gamma}}{\pi + 2 n \pi} - \sin(\pi + 2 n \pi) > 0.
   \end{cases}
  \end{align} 
  Thus, there exists only one real root $\theta_{i, 2 n + 3} \in \left( \frac{3 \pi}{4} + 2 n \pi, \pi + 2 n \pi \right)$.

Then, the non-negative roots of~\eqref{eq:sine-imaginary-part-equation} satisfy the following: 
\begin{align} \label{eq:intervals-of-nonnegative-roots-for-imaginary-part-equation}
 0 &= \theta_{i, 1} < \theta_{i, 2} < \frac{\pi}{4} < \cdots \nonumber \\
 &< ( k - 2 ) \pi + \left( \frac{ (-1)^k - 1 }{8} \right) \pi < \theta_{i, k} \nonumber \\
&< ( k - 2 ) \pi + \left( \frac{ (-1)^k + 1 }{8} \right) \pi < \cdots,
 \end{align}
  where $k \ge 2$. Further, due to the symmetry of the roots of~\eqref{eq:sine-imaginary-part-equation}, the negative roots can be assigned in a descending order as follows:
\begin{align} \label{eq:intervals-of-negative-roots-for-imaginary-part-equation}
 0 &> \theta_{i, -1} > - \frac{\pi}{4} > \cdots \nonumber \\
 &>  - ( k - 1) \pi - \left( \frac{ (-1)^{k + 1} - 1 }{8} \right) \pi > \theta_{i, -k} \nonumber \\ 
 &> - ( k - 1 ) \pi - \left( \frac{ (-1)^{k + 1} + 1 }{8}  \right) \pi > \cdots,   
 \end{align}
 where $k \ge 1$. \\

\textbf{Step 8:} Based on~\eqref{eq:intervals-of-nonnegative-roots-for-imaginary-part-equation} and~\eqref{eq:intervals-of-negative-roots-for-imaginary-part-equation}, we will determine the number of real roots of~\eqref{eq:sine-imaginary-part-equation} in each of the intervals $-2 l \pi + \frac{\pi}{4} \le \theta \le 2 l \pi + \frac{\pi}{4}$ where $l = l_0, l_0 + 1, l_0 + 2, \cdots$, and $l_0$ is a sufficiently large integer. First, substituting $k = 2 l + 2$, $l \ge 1$, into~\eqref{eq:intervals-of-nonnegative-roots-for-imaginary-part-equation}, we have 
 \begin{align}
 ( k - 2 ) \pi + \left( \frac{ (-1)^k + 1 }{8} \right) \pi &= 2 l \pi + \left( \frac{ (-1)^{ 2 l + 2 } + 1 }{8} \right) \pi \nonumber \\
 &= 2 l \pi + \frac{\pi}{4}.
 \end{align}
On the other hand, for $\theta < 0$, substituting $k = 2 l$ into~\eqref{eq:intervals-of-negative-roots-for-imaginary-part-equation} leads to
 \begin{align}
 - ( k - 1 ) \pi - & \left( \frac{ (-1)^{k + 1} + 1 }{8} \right) \pi \nonumber \\
 &= - ( 2 l - 1 ) \pi - \left( \frac{ (-1)^{2 l + 1} + 1 }{8} \right) \pi \nonumber \\
 &= - (2 l - 1 ) \pi 
 >- 2 l \pi + \frac{\pi}{4}.
 \end{align}
In addition, substituting $k = 2 l + 1$ into~\eqref{eq:intervals-of-negative-roots-for-imaginary-part-equation} yields 
 \begin{align}
 - ( k - 1) \pi - &\left( \frac{ (-1)^{k + 1} - 1 }{8} \right) \pi \nonumber \\
 &= - ( 2 l + 1 - 1 ) \pi - \left( \frac{ (-1)^{2 l + 2} - 1 }{8} \right) \pi \nonumber \\
 &= -2 l \pi < - 2 l \pi + \frac{\pi}{4},
 \end{align}
 which indicates that when 
 \begin{align}
 - 2 l \pi + \frac{\pi}{4} \le \theta \le 2 l \pi + \frac{\pi}{4}, l = 1, 2, \cdots, 
 \end{align} 
 there exist exactly $(4 l + 2)$ real roots of~\eqref{eq:sine-imaginary-part-equation}. Thus, equation~\eqref{eq:sine-imaginary-part-equation} has only simple and real roots. \\

\textbf{Step 9:} 
Now, we will show that the roots of~\eqref{eq:cosine-real-part-equation} and~\eqref{eq:sine-imaginary-part-equation} interlace. 

\begin{figure*}[htb!]
\centering{\includegraphics[scale=0.75]{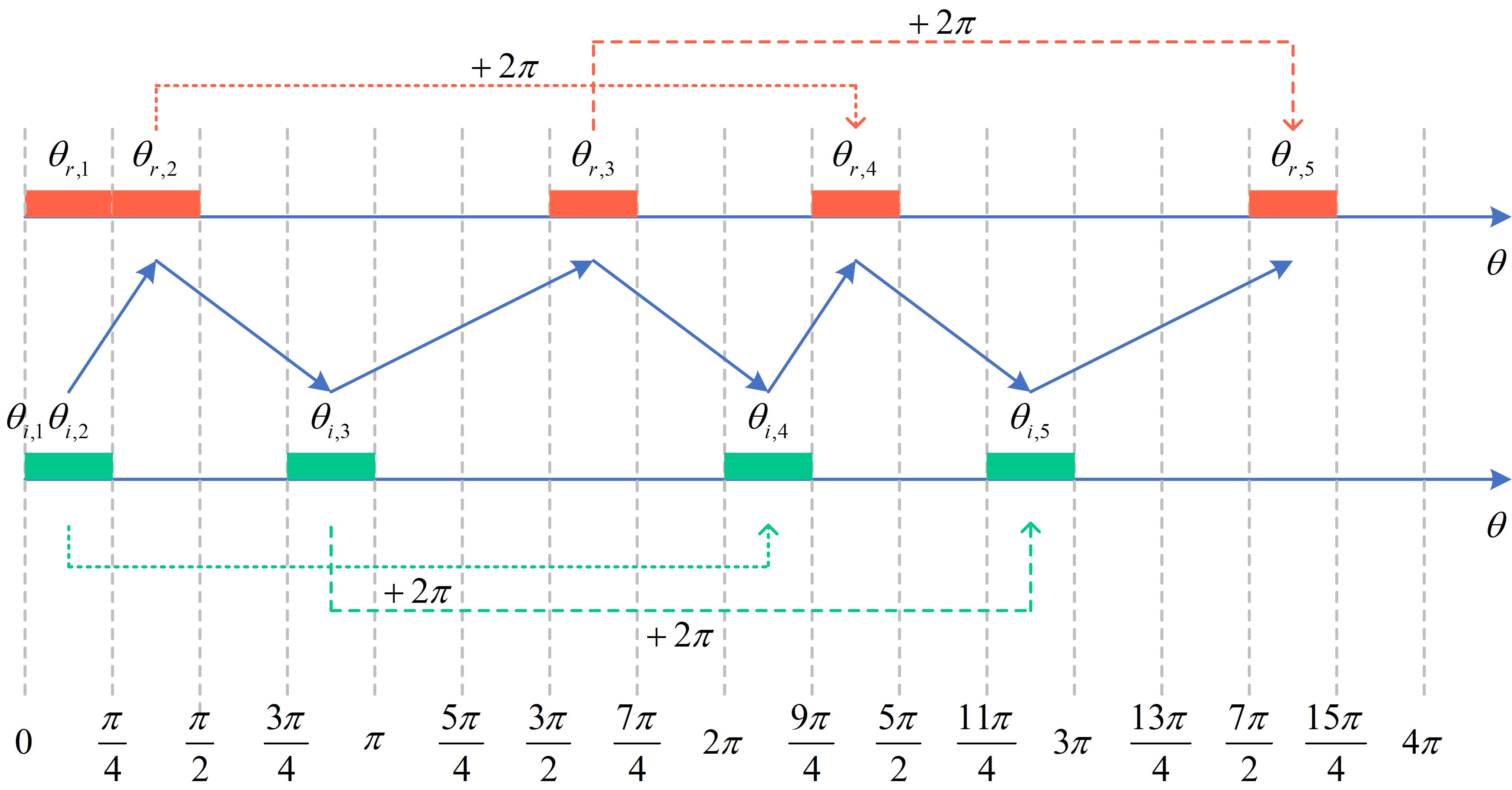}}
\caption{An illustration of the distribution, periodicity and interlacing property of the non-negative roots of $\mathcal{D}_r(\theta) = 0$ and $\mathcal{D}_i(\theta) = 0$.
\label{fig:interlacing-illustration}}
\end{figure*}

As shown in Figure~\ref{fig:interlacing-illustration}, for $k\ge 2$, this is straightforward because
 \begin{align}
 \frac{ (-1)^k + 1 }{8} \le \frac{ 3 - (-1)^k }{8} < 1 + \frac{ (-1)^{k + 1} - 1 }{8},
 \end{align} 
which means that $\theta_{i, k} < \theta_{r, k} < \theta_{i, k + 1}$. Thus,
 \begin{align}
 \theta_{i, 2} < \theta_{r, 2} < \theta_{i, 3} < \theta_{r, 3} < \theta_{i, 4} < \cdots. 
 \end{align}

Next, we will show that the first two roots interlace, i.e., $\theta_{r, 1} < \theta_{i, 2}$. For this purpose, we start with  $\cos\theta > 1 - \frac{\theta^2}{2}$
 for all $\theta > 0$. Then the roots of the equation 
 \begin{align}
 \frac{\bar{k}_p}{\theta^2} = 1 - \frac{\theta^2}{2},
 \end{align}
 are given by 
 \begin{align}
 \hat{\theta}_{r, 1} = \sqrt{ 1 - \sqrt{ 1 - 2 \bar{k}_p } }, \hat{\theta}_{r, 2} = \sqrt{ 1 + \sqrt{ 1 - 2 \bar{k}_p } }.    
 \end{align}
 Then we have $\theta_{r, 1} < \hat{\theta}_{r, 1}$.         
On the other hand, $\sin\theta < \theta$ for all $\theta > 0$. Then, the positive root of 
 \begin{align}
 \theta = \frac{\bar{\gamma}}{\theta},
 \end{align}
 is given by $\hat{\theta}_{i, 2} = \sqrt{\bar{\gamma}}$. 
Thus, $\theta_{i, 2} > \hat{\theta}_{i, 2}$.

Now, to show $\theta_{r, 1} < \theta_{i, 2}$, it suffices to show $\hat{\theta}_{r,1} < \hat{\theta}_{i,2}$, that is,  
\begin{align} \label{eq:hat-theta-r-1-hat-theta-i-2-inequality-condition}
1 - \sqrt{ 1 - 2 \tau^2 k_p } < \bar{\gamma}. 
\end{align}
To prove this inequality, 
first, the following is true:
\begin{align}
 ( 1 - 2 \bar{k}_p ) ( 2 \bar{k}_p + 2 )^2 = - ( 2 \bar{k}_p )^3 - 3 ( 2 \bar{k}_p )^2 + 4 < 4.   
\end{align}
 Taking the square root on both sides of the above inequality yields 
 \begin{align}
  \sqrt{ 1 - 2 \bar{k}_p } ( 2 \bar{k}_p + 2 ) < 2.    
 \end{align}
 Multiplying both sides by $\sqrt{ 1 - 2 \bar{k}_p }$ results in 
 \begin{align}
  ( 1 - 2 \bar{k}_p ) ( 2 \bar{k}_p  + 2 ) < 2 \sqrt{ 1 - 2 \bar{k}_p },   
 \end{align}
 which can be rewritten as 
 \begin{align}
 ( 2 \bar{k}_p )^2 > \left( 1 - \sqrt{ 1 - 2 \bar{k}_p } \right)^2,    
 \end{align}
 from which, we obtain 
\begin{align} \label{eq:condition-1-for-left-inequality}
 1 - \sqrt{ 1 - 2 \tau^2 k_p } < 2 \tau^2 k_p.
 \end{align} 
 In addition, from~\eqref{eq:tau-square-kp-inequality-1}, we have
 \begin{align}
 2 \tau^2 k_p < \frac{ (1 - k_a) ( 1 + k_a )^2 }{4} < 1 - k_a.
 \end{align}
Thus, 
 \begin{align}
 2 \tau^2 k_p < \sqrt{ 2 \tau^2 k_p ( 1 - k_a ) } < \sqrt{ 2 \tau^2 k_p ( 1 - k_a ) + \tau^2 k_v^2 },
 \end{align} 
 i.e., 
 \begin{align}
 2 \tau^2 k_p &< \tau \sqrt{ 2 k_p ( 1 - k_a ) + k_v^2 } \nonumber \\
 &\le \bar{\gamma}.
 \end{align}
 Using~\eqref{eq:condition-1-for-left-inequality}, we have \begin{align}
 1 - \sqrt{ 1 - 2 \tau^2 k_p } < \tau \sqrt{ 2 k_p ( 1 - k_a ) + k_v^2 } \le \bar{\gamma}. 
 \end{align}
Therefore, $\theta_{r,1} < \theta_{i,2}$.

For the numerical values considered in the simulations, $\bar{k}_p = 0.0150$ and $\bar{\gamma} = 0.3710$, the graphs of  $\frac{\bar{k}_p}{\theta^2}$ vs. $\cos\theta$, and $\frac{\bar{\gamma}}{\theta}$ vs. $\sin\theta$ are provided in Figure~\ref{fig:illustration-of-curves} to numerically show the interlacing property when $\theta \in (0,2\pi)$.
\begin{figure}[!htb]
\centering{\includegraphics[scale=0.6]{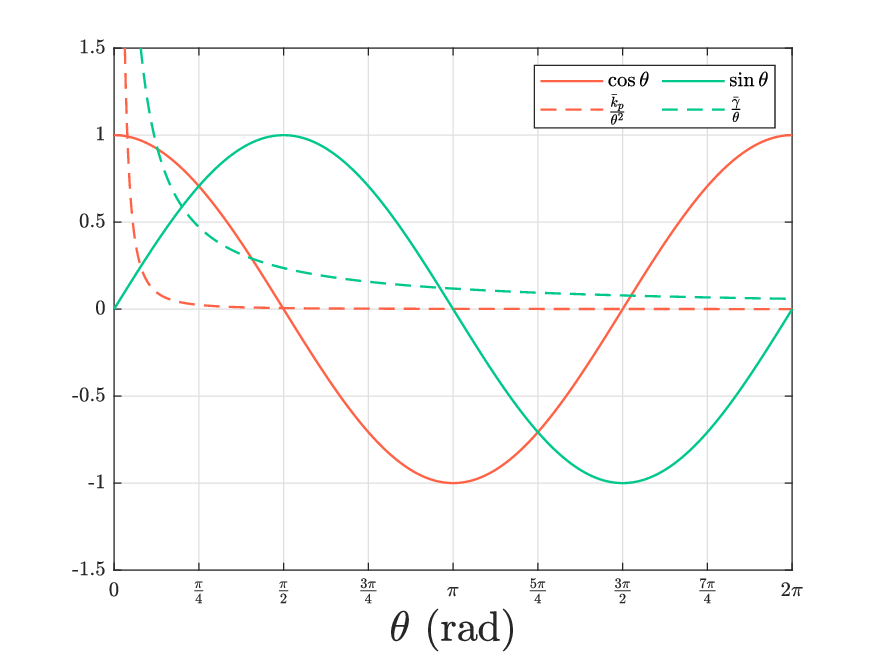}}
\caption{An illustration of $\frac{\bar{k}_p}{\theta^2}$ vs. $\cos \theta$ and $\frac{\bar{\gamma}}{\theta}$ vs. $\sin \theta$.
\label{fig:illustration-of-curves}}
\end{figure} \\

\noindent \textbf{Proof of Statement (B)} 

It follows from~\eqref{eq:D-ast-j-omega} that 
\begin{align}
 \mathcal{D}_r^{\prime}(\omega) &= - 2 \tau^2 \omega \cos(\tau \omega) + \tau^3 \omega^2 \sin(\tau \omega), \\
 \mathcal{D}_i^{\prime}(\omega) &= \tau^2 \gamma - 2 \tau^2 \omega \sin(\tau \omega) - \tau^3 \omega^2 \cos(\tau \omega).
\end{align}
 From this, we can derive the following:
 \begin{align} \label{eq:internal-stability-condition-2}
 &~~~\mathcal{D}_i^{\prime}(\omega) \mathcal{D}_r(\omega) - \mathcal{D}_i(\omega) \mathcal{D}_r^{\prime}(\omega) \nonumber \\
  &= \tau^4 \gamma k_p + \tau^5 \omega^4 - \tau^3 \omega ( 2 \tau k_p + \tau^2 \gamma \omega^2 ) \sin(\tau \omega) \nonumber \\
  &\quad + \tau^4 \omega^2 ( \gamma - \tau k_p ) \cos(\tau \omega).
 \end{align}
 Substituting $\omega = 0$, we obtain 
 \begin{align}
 \mathcal{D}_i^{\prime}(\omega) \mathcal{D}_r(\omega) - \mathcal{D}_i(\omega) \mathcal{D}_r^{\prime}(\omega) = \tau^2 \gamma k_p > 0.    
 \end{align}
This completes the proof for~{\rm \textbf{Theorem~\ref{theorem:the-inclusion-of-internal-stability-in-string-stability-cacc-case}}}. 
\end{proof}


 \section{} \label{Appendix_B}
 \begin{proof}[Proof for Theorem~\ref{theorem:the-inclusion-of-internal-stability-in-string-stability-cacc-plus-case}]
 Note from~\eqref{eq:string-stability-condition-cacc-plus-case-with-identical-control-gains}, one can define 
 \begin{align}
 H_r(s; \tau) = \frac{\mathcal{N}_r(s)}{\mathcal{D}_r(s)},
 \end{align}
 where $\mathcal{N}_r(s) = \tilde{k}_a s^2 + \tilde{k}_v s + \tilde{k}_p$, $\mathcal{D}_r(s) = s^2 e^{\tau s} + ( \tilde{k}_v + \tilde{k}_p \tilde{h}_w ) s + \tilde{k}_p$. Then the inter-vehicular spacing error propagation equation~\eqref{eq:inter-vehicular-spacing-error-propagation-equation-cacc-plus-case} with identical control gains, i.e. $k_{aj} = k_a$, $k_{vj} = k_v$, $k_{pj} = k_p$, is internally stable iff the polynomial $\mathcal{D}_r(s)$ satisfies the conditions in~{\rm \textbf{Lemma~\ref{lemma:internal-stability}}}. Comparing $H_r(s; \tau)$ with $H(s; \tau)$ as defined in~\eqref{eq:inter-vehicular-spacing-error-propagation-equation-cacc-case}, the conditions on $\tilde{k}_a$, $\tilde{k}_v$, $\tilde{k}_p$ and $\tilde{h}_w$ derived in~{\rm \textbf{Theorem~\ref{theorem:string-stability-condition-ka-range-cacc-plus-case}}} and~{\rm \textbf{Theorem~\ref{theorem:string-stability-condition-cacc-plus-case}}}, correspond to those for $k_a$, $k_v$, $k_p$ and $h_w$ derived in~{\rm \textbf{Theorem~\ref{theorem:string-stability-condition-ka-range}}} and~{\rm \textbf{Theorem~\ref{theorem:string-stability-condition}}}. Therefore, following the same procedure as in~{\rm \textbf{Theorem~\ref{theorem:the-inclusion-of-internal-stability-in-string-stability-cacc-case}}}, internal stability of polynomial $\mathcal{D}_r(s)$ can be ensured by the derived string stability condition for $H_r(s; \tau)$. Therefore, the proof for~{\rm \textbf{Theorem~\ref{theorem:the-inclusion-of-internal-stability-in-string-stability-cacc-plus-case}}} is completed. 
 \end{proof}



	\bibliography{reference}
	\bibliographystyle{plain}
\end{document}